\newtheorem{theorem}{Theorem}[section]
\newtheorem{lemma}[theorem]{Lemma}
\newtheorem{corollary}[theorem]{Corollary}
\newtheorem{definition}[theorem]{Definition}
\newtheorem{problem}[theorem]{Problem}
\newtheorem{remark}[theorem]{Remark}
\newcommand{\RP}{\ensuremath{\mathsf{RP}}}
\newcommand{\NP}{\ensuremath{\mathsf{NP}}}
\newcommand{\R}{\mathop{\mathbf{R}}}
\newcommand{\E}{\mathop{\mathbf{E}}}
\renewcommand{\Pr}{\mathop{\mathbf{Pr}}}
\newcommand{\VSTAT}{{\mbox{VSTAT}}}
\newcommand{\STAT}{{\mbox{STAT}}}
\newcommand{\SD}{{\mathrm{SD}}}
\newcommand{\SDA}{{\mathrm{SDA}}}
\newcommand{\HSQ}{{\mathrm{HSQ}}}
\newcommand{\SAMPLE}{{\mbox{1-STAT}}}
\newcommand{\D}{{\mathcal D}}
\newcommand{\F}{{\mathcal F}}
\newcommand{\Z}{{\mathcal Z}}
\newcommand{\eps}{\epsilon}
\newcommand{\B}{{\mathcal B}}
\newcommand{\C}{{\mathcal C}}
\newcommand{\ra}{\rangle}
\newcommand{\la}{\langle}
\newcommand{\A}{{\mathcal A}}
\newcommand{\sgn}{\mathsf{sign}}
\newcommand{\SQDIM}{\mbox{SQ-DIM}\xspace}
\newcommand{\eat}[1]{}
\newcommand{\norm}[1]{\left\|#1\right\|}
\newcommand{\abs}[1]{\left|#1\right|}
\newcommand{\ones}[1]{\norm{#1}_1}
\newcommand{\zeros}[1]{\norm{\overline{#1}}_1}
\newcommand{\innerprod}[2]{\left<{ #1},{#2}\right>}
\newcommand{\cald}{\mathcal D}
\newcommand{\apbc}{\textsc{APBC}}
\newcommand{\dpbc}{\textsc{DPBC}}
\newcommand{\zo}{\{0,1\}}
\newcommand{\cond}{\ |\ }
\newcommand{\Bin}{B}
\newcommand{\cS}{{\mathcal S}}
\newcommand{\sampleoracle}{$1$-bit sampling\xspace}
\newcommand{\equ}[1]{
\begin{equation}
#1
\end{equation}}
\newcommand{\alequn}[1]{\begin{align*} #1 \end{align*}}
\providecommand{\eg}{{\em e.g.}~}
\newcommand{\citeAN}{\citet}
\newcommand{\citeP}{\citep}
\newcommand{\citeY}{\citeyearpar}
\def\mnotes{1} 
\newcommand{\mnote}[1]{}
        \newcounter{mynotes}
        \newcommand{\mnote}[1]{\addtocounter{mynotes}{1}{{\bf !}}
        \marginpar{\scriptsize  {\arabic{mynotes}.\ {\sf \textcolor{blue}{#1}}}}}
\begin{document}

\begin{titlepage}
\title{Statistical Algorithms and a Lower Bound for Detecting Planted Cliques}

\eat{
\author{
Vitaly Feldman
\qquad Elena Grigorescu\thanks{Research supported in part by the NSF Grant \#1019343 to the Computing Research Association for the
CIFellows Project.}
\thanks{
Research supported in part by NSF awards
CCF-0915903 and CCF-1217793.
}
\qquad Lev Reyzin\thanks{Research supported by a Simons Postdoctoral Fellowship.}
\footnotemark[2]\\ \\
 Santosh S.\ Vempala\footnotemark[2]
\qquad Ying Xiao\footnotemark[2]\\
\\ \\
\texttt{vitaly@post.harvard.edu}\\
Almaden Research Center\\
IBM\\
San Jose, CA 95120\\ \\
\texttt{elena-g@purdue.edu}\\
Department of Computer Science\\
Purdue University\\
West Lafayette, IN 47907\\ \\
\texttt{lreyzin@math.uic.edu}\\
Department of Mathematics, Statistics, and Computer Science\\
University of Illinois at Chicago\\
Chicago, IL 60607\\ \\
\texttt{\{vempala,yxiao32\}@cc.gatech.edu}\\
School of Computer Science\\
Georgia Institute of Technology\\
Atlanta, GA 30332}
}

\author{
Vitaly Feldman\thanks{IBM Almaden Research Center.}
\qquad Elena Grigorescu\thanks{Department of Computer Science,
Purdue University. Supported by the National Science Foundation under Grant \#1019343 to the Computing Research Association for the
CIFellows Project.}
\thanks{
Research supported in part by NSF awards
CCF-0915903 and CCF-1217793.
}
\qquad Lev Reyzin\thanks{Department of Mathematics, Statistics, and Computer Science,
University of Illinois at Chicago. Research supported by a Simons Postdoctoral Fellowship.}
\footnotemark[3]\\ \\
 Santosh S.\ Vempala\thanks{School of Computer Science,
Georgia Institute of Technology. Research supported by NSF awards CCF-1217793 and EAGER-1415498.}~\footnotemark[3]
\qquad Ying Xiao\footnotemark[5]~\footnotemark[3]
}

\date{}

\maketitle
\thispagestyle{empty}

\begin{abstract}
  We introduce a framework for proving lower bounds on computational
  problems over distributions against algorithms that can be implemented using
  access to a {\em statistical query} oracle.   For such algorithms, access to the
  input distribution is limited to obtaining an estimate of the
  expectation of any given function on a sample drawn randomly from
  the input distribution, rather than directly accessing samples.
  Most natural algorithms of interest in theory and in practice,
  e.g.,\ moments-based methods, local search, standard iterative
  methods for convex optimization, MCMC and simulated annealing can be
  implemented in this framework. Our  framework is based on, and generalizes, the statistical query model
  in learning theory \citeP{Kearns:98}.

  Our main application is a nearly optimal lower bound on the complexity of {\em any} statistical query algorithm for detecting planted
  bipartite clique distributions (or planted dense subgraph
  distributions) when the planted clique has size $O(n^{1/2-\delta})$
  for any constant $\delta > 0$.  The assumed hardness of variants of these problems
has been used to prove hardness of several other problems and as a guarantee for security in cryptographic applications.
  Our lower bounds provide concrete
  evidence of hardness, thus supporting these assumptions.
\end{abstract}
\end{titlepage}

\tableofcontents
\newpage
\section{Introduction}

We study the complexity of problems where the input consists of independent samples from an unknown distribution.
Such problems are at the heart of machine learning and statistics (and their numerous applications) and also occur in many other contexts such as compressed sensing and cryptography.
While several methods have been developed to estimate the sample complexity of such problems (e.g.~VC dimension  \citeP{VapnikChervonenkis:71} and Rademacher complexity \citeP{BartlettMendelson:02}), proving lower bounds on the computational complexity of these problems has been much more challenging.
 The traditional approach to proving lower bounds  is via reductions and by finding distributions that can generate instances of some problem conjectured to be intractable (e.g., assuming $\NP \neq \RP$).

Here we present a different approach. We show that algorithms which access the unknown distribution only via a {\em statistical query (SQ) oracle} have high complexity, unconditionally. Most algorithmic approaches used in practice and in theory on a wide variety of problems can be
implemented using only access to such an oracle; these include Expectation Maximization (EM)~\citeP{DempsterLR77},
local search, MCMC optimization \citeP{TannerW87,GelfandSmith90},
simulated annealing~\citeP{KirkpatrickGV83,Cerny1985Thermodynamical},
first and second order methods for linear/convex optimization, ~\citeP{DunaganV08,BelloniFV09}, $k$-means, Principal Component Analysis (PCA), Independent Component Analysis
(ICA), Na\"{\i}ve Bayes, Neural Networks and many others (see \citeP{ChuKLYBNO:06} and \citeP{BlumDMN:05} for proofs and many other examples). In fact, we are aware of only one
algorithm that provably does not have a statistical query counterpart: Gaussian elimination for solving linear equations over a field (\eg$\mod 2$).

Informally, a statistical query oracle provides an estimate of the expected value of any given bounded real-valued function within some tolerance. Many popular algorithms rely only on the average value of various functions over random samples (commonly referred to as
{\em empirical averages}). Standard Chernoff-Hoeffding bounds imply that the average value of a bounded function on the independent samples will be highly concentrated around the
expectation on the unknown distribution (and, indeed in many cases the empirical average is used precisely to obtain an estimate of the expectation). As a result such algorithms can
often be equivalently analyzed in our oracle-based model.

Our approach also allows proving lower bounds against algorithms that rely on a {\em \sampleoracle} oracle, referred to as \sampleoracle algorithms. This oracle provides the value of any Boolean function on a fresh random sample from the
distribution. Many existing algorithms require only such limited access to random samples. Others can be implemented using such access to samples (possibly using a polynomially larger number of samples). For brevity, we refer to algorithms that rely on either of these types of oracles as {\em statistical algorithms}.

For example, many problems over distributions are solved using convex programs. Such a problem is typically formulated as finding an approximation to $\min_{z \in K} \E_{x \sim D}[f(x,z)]$
for some convex set $K$ and functions $f(x,\cdot)$ that are convex in the second parameter $z$. A standard approach (both in theory and practice) to solve such a problem is to use a
gradient descent-based technique.  The gradient of the
objective function is 
\[
\nabla_z \E_x[f(x,z)] = \E_x[\nabla_z f(x,z)]
\]
and is usually estimated using the average value of $\nabla_z f(x,z)$ on (some of) the given random samples. However, standard analysis of gradient descent-based algorithms implies that a sufficiently accurate estimate of each of the coordinates of $\E_x[\nabla_z f(x,z)]$ would also suffice. Hence, for an objective function of the form above, gradient descent
can be implemented using either of the above oracles (detailed analysis of such implementations can be found in a subsequent work \citeP{FeldmanGV:15}).

\eat{
 A more involved example is Linear Programming. One version is the feasibility problem: find a nonzero $w$ s.t. $x\cdot w \ge 0$ for all $x$ in some
set $A$. We can formulate this as
\[
\max_w \E_{x\sim D} [ \sgn(x \cdot w)]
\]
and the distribution $D$ could be uniform over the set $A$ if it is finite. This problem can be solved by a SQ algorithm \citeP{BlumFKV98, DunaganV08}. This is also the case for
semidefinite programs and for conic optimization \citeP{BelloniFV09}. }
The key motivation for our framework is the empirical observation that almost all algorithms that work on random
samples are either already statistical in our sense or have natural statistical counterparts. Thus,  lower bounds for statistical algorithms can be directly translated into lower bounds
against a large number of existing approaches. 
We present the formal oracle-based definitions of statistical algorithms in Section \ref{sec:defs}.


Our model is based on the {\em statistical query learning}
model  \citeP{Kearns:98} defined as a restriction of
Valiant's \citeY{Valiant84} PAC learning model. The primary goal of
the restriction was to simplify the design of noise-tolerant learning
algorithms. As was shown by Kearns and others in subsequent works,
almost all classes of functions that can be learned efficiently can
also be efficiently learned in the SQ model. A notable and
so far unique exception is the algorithm for learning parities,
based on Gaussian elimination. As was already shown by
\citeAN{Kearns:98}, parities require exponentially many queries to learn in the
SQ model. Further, \citeAN{BlumFJKMR94} proved that the number of SQs
required for weak learning (that is, for obtaining a non-negligible
advantage over the random guessing) of a class of functions $C$ over a
fixed distribution $D$ is characterized by a combinatorial parameter
of $C$ and $D$, referred to as SQ-DIM$(C,D)$, the SQ
dimension.

We consider SQ algorithms in the broader context of arbitrary computational problems over distributions. We also define an SQ oracle that strengthens the oracle introduced by \citeAN{Kearns:98}.
For any problem over distributions we define a parameter of the problem that lower bounds the complexity of solving the problem by any SQ algorithm in the same way
that SQ-DIM lower bounds the complexity of learning in the SQ model. Our techniques for proving lower bounds are also based on methods developed for
lower-bounding the complexity of SQ learning algorithms. However, as we will describe later, they depart from the known techniques in a number of significant ways that are necessary for our
more general setting and our applications.

The \sampleoracle oracle and its more general $k$-bit version was introduced by \citeAN{Ben-DavidD98}. They showed that it is equivalent (up to polynomial factors) to the SQ oracle. Using our stronger SQ oracle we sharpen this equivalence. This sharper relationship is crucial for  obtaining meaningful lower bounds against $1$-bit sampling algorithms in our applications.

We demonstrate our techniques by applying them to the problems of detecting planted bipartite cliques and planted bipartite dense subgraphs. We now define these problems precisely and give some background.

\smallskip \noindent{\bf Detecting Planted Cliques.}  In the planted clique
problem, we are given a graph $G$ whose edges are generated by
starting with a random graph $G_{n,1/2}$, then ``planting," i.e.,
adding edges to form a clique on $k$ randomly chosen vertices.
\citeAN{Jerrum92} and \citeAN{Kucera95} introduced the planted clique problem as a potentially
easier variant of the classical problem of finding the largest clique
in a random graph \citeP{Karp:79}. A random graph $G_{n,1/2}$ contains
a clique of size $2\log n$ with high probability, and a simple greedy
algorithm can find one of size $\log n$. Finding cliques of size
$(2-\eps)\log n$ is a hard problem for any $\eps > 0$. Planting a
larger clique should make it easier to find one. The problem of
finding the smallest $k$ for which the planted clique can be detected
in polynomial time has attracted significant attention. For $k\geq c
\sqrt{n\log n}$, simply picking vertices of large degrees suffices
\citeP{Kucera95}. Cliques of size $k=\Omega(\sqrt{n})$ can be found
using spectral methods ~\citeP{AlonKS98,McSherry01,Coja10}, via SDPs \citeP{FeigeK00}, combinatorial methods \citeP{FeigeRon:10,DekelGP:11}, nuclear norm minimization
\citeP{AmesV11} and belief propagation \citeP{DeshpandeMontanari13}.

While there is no known polynomial-time
algorithm that can detect cliques of size below the threshold of
$\Omega(\sqrt{n})$, there is a
quasipolynomial algorithm for any $k \ge 2\log n$: enumerate subsets of size $2\log n$; for each subset that forms a clique, take all common neighbors of the subset; one of these will
be the planted clique. This is also the fastest known algorithm for any $k = O(n^{1/2-\delta})$, where $\delta > 0$.

Some evidence of the hardness of the problem was shown by
\citeAN{Jerrum92} who proved that a specific approach using a Markov
chain cannot be efficient for $k=o(\sqrt{n})$. Additional evidence of
hardness is given in \citeP{feige2003probable}, where it is shown that
Lov\'{a}sz-Schrijver SDP relaxations, which include the SDP used in
\citeP{FeigeK00}, cannot be used to efficiently find cliques of size
$k=o(\sqrt{n})$. Most recently, lower bounds against a constant level of the more powerful Sum-of-Squares SDP hierarchy were shown by \citeAN{MekaPW15} and \citeAN{DeshpandeM15}.
  The problem has been used to generate cryptographic
primitives~\citeP{JuelsP00}, and as a hardness assumption in a large number of works (\eg \citeP{AlonAKMRX07,HazanK11,MV:09,BerthetR:13,Dughmi14}).

We focus on the bipartite planted clique problem, where a $(k \times k)$-biclique is planted in a random bipartite graph. A densest-subgraph version of the bipartite planted clique
problem has been used as a hard problem for cryptographic applications \citeP{ApplebaumBW10}.
The bipartite version can be easily seen to be at least as hard as the original version. At the same time all known bounds and algorithms for the $k$-clique problem can be easily adapted
to the bipartite case (\eg \citeP{AmesV11}). Therefore it is natural to expect that new upper bounds on the planted $k$-clique problem would also yield new upper bounds for the
bipartite case.

The starting point of our investigation for this problem is the property of the planted $k$-biclique problem that it has an equivalent formulation as a problem over distributions
defined as follows.
\begin{problem}\label{def:clique}
Fix an integer $k$, $1 \le k \le n$, and a subset of $k$ indices $S\subseteq \{1,2,\ldots,n\}$.
The input distribution $D_S$ on vectors $x \in \{0,1\}^n$ is defined as follows:
with
probability $1-(k/n)$, $x$ is uniform over $\{0,1\}^n$; and with probability $k/n$, $x$ is such that its $k$ coordinates from $S$ are set to $1$,
and the remaining coordinates are uniform in $\{0,1\}$. For an integer $t$, the \textbf{distributional planted $k$-biclique} problem with $t$ samples is the problem of finding the
unknown subset $S$ using $t$ samples drawn randomly from $D_S$.
\end{problem}

One can view samples $x_1,\ldots,x_t$ as adjacency vectors of the vertices of a
bipartite graph as follows: the bipartite graph has $n$ vertices on the right (with $k$ marked as members of the clique) and $t$ vertices on the left. Each of the $t$ samples gives the
adjacency vector of the corresponding vertex on the left. It is not hard to see that for $t=n$, conditioned on the event of getting exactly $k$ samples with planted indices, we will get a
random bipartite graph with a planted $(k\times k)$-biclique (we prove the equivalence formally in Appendix \ref{sec:avg-to-dist-clique}).

One interesting approach for finding the planted clique was proposed by \citeAN{FriezeK08}. They
gave a reduction from finding a planted clique
in a random graph to finding a direction that maximizes a $2$nd order tensor norm; this was
extended to general $r$'th order tensor norm in \citeP{BV}.
 Specifically, they show that
maximizing the $r$'th moment (or the $2$-norm of an $r$'th order tensor)
allows one to recover planted cliques of size $\tilde{\Omega}(n^{1/r})$.
A related approach is to maximize the 3rd or higher moment of the distribution given by the
distributional planted clique problem. For this approach it is natural to
consider the following type of optimization algorithm: start with some
unit vector $u$, then estimate the gradient at $u$ (via samples), move
along that direction and return to the sphere; repeat to reach an approximate local
maximum.  Unfortunately, over the unit sphere, the expected $r$'th
moment function can have (exponentially) many local maxima even for
simple distributions.  A more sophisticated
approach~\citeP{Kannan-personal} is through Markov chains or simulated
annealing; it attempts to sample unit vectors from a distribution
on the sphere which is heavier on vectors that induce a higher moment,
e.g., $u$ is sampled with density proportional to $e^{f(u)}$ where
$f(u)$ is the expected $r$'th moment along $u$.  This could be
implemented by a Markov chain with a Metropolis
filter~\citeP{MetropolisRRTT53,Hastings70} ensuring a proportional steady state
distribution.  If the Markov chain were to mix rapidly, that would
give an efficient approximation algorithm because sampling from the
steady state likely gives a vector of high moment. At each step, all
one needs is to be able to estimate $f(u)$, which can be done by
sampling from the input distribution.

As we will see presently, these approaches can be easily implemented in our framework and will
have provably high complexity.
For the distributional planted biclique problem, SQ algorithms need $n^{\Omega(\log n)}$ queries to detect
planted bicliques of size $k < n^{\frac{1}{2}-\delta}$ for any $\delta > 0$. Even stronger exponential bounds apply for the more general problem of detecting planted dense subgraphs
of the same size. These bounds match the known upper bounds.
To describe these results precisely and discuss exactly what they mean for the complexity of these problems, we will need to define the models of statistical algorithms, the complexity
measures we use, and our main tool for proving lower bounds, a notion of statistical dimension of a set of distributions. We do this in the next section. In Section \ref{sec:stat-dim} we
prove our general lower bound results and in Section \ref{sec:strongclique} we estimate the statistical dimension of detecting planted bicliques and dense subgraphs.

\section{Definitions and Overview}\label{sec:defs}
Here we formally define statistical algorithms and the key notion of statistical dimension, and then describe the resulting lower bounds in detail.

\subsection{Problems over Distributions}
We begin by formally defining the class of problems addressed by our framework.
\begin{definition}[Search problems over distributions]\label{def:searchD}
  For a domain $X$,
  let $\D$ be a set of distributions over $X$, let $\F$ be a set called {\em solutions} and $\Z:\D \rightarrow 2^{\F}$   be a map from a distribution $D \in \D$ to a subset of solutions $\Z(D)
  \subseteq \F$ that are defined to be valid solutions for $D$.
  The {\em distributional search problem} $\Z$ over $\D$ and $\F$ using $t$ samples is to find a valid solution $f \in \Z(D)$ given access (to an oracle or samples from) an unknown $D \in \D$.
\end{definition}

In some settings it is natural to parameterize the set of valid solutions by additional parameters, such as accuracy. 
The extension of the definition to such
settings is immediate. An example of a distributional search problem is the distributional planted $k$-biclique we described in Definition \ref{def:clique}. In this case the domain $X$ is $\{0,1\}^n$, the set of input
distributions is all the distributions with a planted $k$-biclique $\D = \{ D_S \cond S \subset [n],\ |S| = k\}$ and the set of solutions is the set of all subsets of size $k$: $\F= \{ S \cond S \subset
[n],\ |S| = k\}$. For each $D_S$ there is a single valid solution $S$. For a second example, we point the reader to the distributional MAX-XOR-SAT problem in Section \ref{sec:max-xor-sat}.

We note that this definition also captures decision problems by having $\F = \{0,1\}$.  A simple example of a
decision problem over distributions that is relevant to our discussion is that of distinguishing a planted biclique distribution from the uniform distribution over $\zo^n$ which we denote by
$U$. Here the set of input distributions is $\D = \{U\} \cup \{ D_S \cond S \subset [n],\ |S| = k\}$. The only valid solution for a planted biclique distribution $D_S$ is 1 and the only valid
solution for $U$ is 0. For a solution $f \in \F$, we denote by $\Z_f$ the set of distributions in $\D$ for which $f$ is a valid solution.

It is important to note that the number of available random samples $t$ can
have a major influence on the complexity of the problem. First, for
most problems there is a minimum $t$ for which the problem is
information-theoretically solvable. This value is often referred to as
the {\em sample complexity} of the problem.
But even for $t$ which is
larger than the sample complexity of the problem, having more samples
can make the problem easier computationally. For example, in the
context of attribute-efficient learning, there is a problem that is intractable with few
samples (under cryptographic assumptions) but is easy to solve with a
larger (but still polynomial) number of samples \citeP{Servedio:00jcss}. Our distributional
planted biclique problem exhibits the same phenomenon.

\subsection{Statistical Algorithms}
The statistical query learning model of \citeAN{Kearns:98} is a
restriction of the PAC model \citeP{Valiant84}. It introduces an oracle that allows a learning algorithm
to obtain an estimate of the expectation of any bounded function of an example. A query to such an oracle is referred to as {\em statistical query}. Kearns showed that many known PAC
learning algorithms can be expressed as algorithms using statistical queries instead of random examples themselves. The main goal of Kearns' model was to give a simple way to design
algorithms tolerant to random classification noise. Since the introduction of the model SQ algorithms have been given for many more learning tasks and the model itself
found applications in a number of other contexts such as differential privacy \citeP{BlumDMN:05,KasiviswanathanLNRS11}, learning on massively parallel architectures
\citeP{ChuKLYBNO:06} and evolvability \citeP{Feldman:08ev}.

In the same spirit, for general search problems over a distribution, we define SQ algorithms as algorithms that do
not see samples from the distribution but instead have access to a SQ oracle. The first SQ oracle we define is the natural generalization of the oracle defined by
\citeAN{Kearns:98} to samples from an arbitrary distribution.
\begin{definition}[$\STAT$ oracle]
  Let $D$ be the input distribution over the domain $X$. For a {\em tolerance} parameter $\tau > 0$, $\STAT(\tau)$ oracle is the oracle that for any query  function $h: X \rightarrow
  [-1,1]$, returns a value $$v \in \left[\E_{x\sim D}[h(x)] - \tau, \E_{x\sim D}[h(x)] + \tau\right].$$
\end{definition}

The general algorithmic techniques mentioned earlier can all be expressed as algorithms using $\STAT$ oracle instead of samples themselves, in most cases in a straightforward way. We
would also like to note that in the PAC learning model some of the algorithms, such as the Perceptron algorithm, did not initially appear to fall into the SQ framework but SQ analogues
were later found for all known learning techniques except Gaussian elimination (for specific examples,  see \citeP{Kearns:98} and \citeP{BlumFKV98}). We expect the situation to be
similar even in the broader context of search problems over distributions.

The most natural realization of $\STAT(\tau)$ oracle is one that
computes $h$ on $O(1/\tau^2)$ random samples from $D$ and returns their
average. Chernoff's bound implies that the estimate is within the desired tolerance (with constant probability). However, if $h(x)$ is very biased (e.g.~equal to 0 with high probability), it
can be estimated with fewer samples. Our primary application requires a tight bound on the number of samples necessary to solve a problem over distributions. Therefore we define a
stronger version of $\STAT$ oracle which tightly captures the accuracy of an estimate of the expectation given by random samples. 
More formally, for a Boolean query function $h:X\rightarrow \{0,1\}$, $\VSTAT(t)$ can return any value $v$ for which the Binomial distribution $\Bin(t,v)$  (sum of $t$ independent Bernoulli variables with bias $v$) is statistically close (for some constant distance) to $\Bin(t,\E[h])$. See
Sec.~\ref{sec:vstat-to-sample} for more details on this correspondence.
\begin{definition}[$\VSTAT$ oracle]
  Let $D$ be the input distribution over the domain $X$. For a {\em sample size} parameter $t > 0$, $\VSTAT(t)$ oracle is the oracle that for any query function $h: X \rightarrow
  [0,1]$, returns a value $v \in \left[p - \tau, p + \tau\right],$ where $p = \E_{x\sim D}[h(x)]$ and $\tau = \max\left\{\frac{1}{t}, \sqrt{\frac{p(1-p)}{t}}\right\}$.
\end{definition}

Note that $\VSTAT(t)$ always returns the value of the expectation within $1/\sqrt{t}$. Therefore it is no weaker than
 $\STAT(1/\sqrt{t})$ and no stronger than $\STAT(1/t)$. 

The $\STAT$ and $\VSTAT$ oracles we defined can return any value
within the given tolerance and therefore can make adversarial
choices. We also aim to prove lower bounds against algorithms that use
a more benign, \sampleoracle oracle\footnote{In the STOC 2013 extended abstract, this oracle is also called the {\em unbiased} statistical oracle}. The
 \sampleoracle oracle gives the algorithm the true value of a
Boolean query function on a randomly chosen sample. This oracle is a special case of the $k$-bit sampling oracle introduced by \citeAN{Ben-DavidD98} who refer to it as the {\em weak Restricted Focus of Attention (wRFA)} model and is also equivalent to the Honest SQ oracle of \citeAN{Yang:01}. Learning in this model has been studied in more recent work motivated by communication constraints on data processing in a distributed computing system. \citeP{ZhangDJW13,SteinhardtD15,SteinhardtVW16}.
\begin{definition}[$\SAMPLE$ oracle]
  Let $D$ be the input distribution over the domain $X$. The $\SAMPLE$ oracle is the oracle that given any function $h: X \rightarrow \{0,1\}$,
  takes an independent random sample $x$ from $D$ and returns $h(x)$.
\end{definition}

Note that the $\SAMPLE$ oracle draws a fresh sample upon each time it is called.
Without re-sampling each time, the answers of the $\SAMPLE$ oracle could be easily used to recover any sample bit-by-bit, making it equivalent to having access to random
samples.
Note that the $\SAMPLE$ oracle can be used to simulate $\VSTAT$ (with
high probability) by taking the average of $O(t)$ replies of
$\SAMPLE$ for the same function $h$. While it might seem that access to
$\SAMPLE$ gives an algorithm more power than access to $\VSTAT$ we will show that $t$ samples from $\SAMPLE$ can be simulated using access to $\VSTAT(O(t))$. This will allow us
to translate our lower bounds on SQ algorithms with access to $\VSTAT$ to lower bounds against \sampleoracle algorithms.

\eat{
In the following discussion, we refer to algorithms using $\STAT$, $\VSTAT$ or $\SAMPLE$ oracles (instead of regular samples) as {\em statistical algorithms}. 
The {\em query complexity} of a statistical algorithm is defined to be the number of calls it makes to the oracle.
}

\subsection{Statistical Dimension}
The main tool in our analysis is an information-theoretic bound on the
complexity of statistical algorithms.
Our definitions originate from the statistical query
(SQ) dimension \citeP{BlumFJKMR94} used to characterize SQ learning algorithms. Roughly speaking, the SQ dimension corresponds to the number of nearly
uncorrelated labeling functions in
a class (see Section \ref{sec:relate-2-sq} for the details of the definition
and the relationship to our bounds).

We introduce a natural generalization and strengthening of this approach to search problems over arbitrary sets of distributions and prove lower bounds on the complexity of statistical
algorithms based on the generalized notion. Our definition departs from SQ dimension in three aspects. (1) Our notion applies to any set of distributions; in the learning setting
all known definitions of statistical dimension require fixing the distribution over the domain and only allow varying the labeling function. Such an extension was not known even in the context of
PAC learning. (2) Instead of relying on a bound on pairwise correlations, our dimension relies on a bound on average correlations in a large set of distributions. This weaker condition
allows us to derive tight bounds on the complexity of SQ algorithms for the planted $k$-biclique problem. (3) We show that our notion of dimension also gives lower bounds for the
stronger $\VSTAT$ oracle (without incurring a quadratic loss in the parameter).

We now define our dimension formally.
For two functions $f,g: X \rightarrow \R$ and a distribution $D$ with
probability density function $D(x)$, the inner product of $f$ and $g$
over $D$ is defined as
\begin{align*}
\langle f, g \rangle_D \doteq \E_{x \sim D}[f(x)g(x)].
\end{align*}
The norm of $f$ over $D$ is $\|f\|_D = \sqrt{\langle f, f
  \rangle_D}$. We remark that, by convention, the integral from the
inner product is taken only over the support of $D$, i.e. for $x\in X$
such that $D(x)\not=0$.
Given a distribution $D$ over $X$ let $D(x)$ denote the probability
 density function of $D$ relative to some fixed underlying measure
 over $X$ (for example uniform distribution for discrete $X$ or
 Lebesgue measure over $\R^n$). Our bound is based on the inner
 products between functions of the following form: $(D'(x)-D(x))/D(x)$
 where $D'$ and $D$ are distributions over $X$.  For this to be
 well-defined, we will only consider cases where $D(x)=0$ implies
 $D'(x)=0$, in which case $D'(x)/D(x)$ is treated as 1. To see why
 such functions are relevant to our discussion, note that for  every
 real-valued function $f$ over $X$,
\begin{align*}
  \E_{x \sim D'}[f(x)] - \E_{x \sim D}[f(x)] & = \E_{x \sim
    D}\left[\frac{D'(x)}{D(x)}f(x)\right] -
  \E_{x \sim D}[f(x)]   = \left\langle \frac{D'-D}{D},f\right\rangle_D.
\end{align*}
This means that the inner product of any function $f$ with $(D'-D)/D$ is equal to the difference of expectations of $f$ under the two distributions. Analyzing this quantity for an arbitrary set of functions $f$ was the high-level approach of statistical query lower bounds for learning. Here we depart from this approach, by defining a {\em pairwise correlation} of two distributions, independent of any specific query function. For two distributions $D_1, D_2$ and a reference distribution $D$, their pairwise correlation is defined as:
\[
\chi_D(D_1, D_2) = \left|\left\langle
    \frac{D_1}{D}-1, \frac{D_2}{D}-1 \right\rangle_D \right|.
\]
When $D_1 = D_2$, the quantity
$\langle \frac{D_1}{D}-1, \frac{D_1}{D}-1\rangle_D $ is known as the $\chi^2(D_1, D)$ distance and is widely used for hypothesis testing in statistics \citeP{Pearson:1900}.

A key notion for
our statistical dimension is the {\em average correlation} of a set of distributions $\D'$ relative to a distribution $D$. We denote it by $\rho(\D',D)$ and define as follows:
\begin{align*}
\rho(\D',D) \doteq \frac{1}{|\D'|^2}\sum_{D_1,D_2 \in \D'} \chi_D(D_1, D_2) =
\frac{1}{|\D'|^2}\sum_{D_1,D_2 \in \D'} \left|\left\langle
    \frac{D_1}{D}-1, \frac{D_2}{D}-1 \right\rangle_D \right| .
\end{align*}

Bounds on pairwise correlations easily imply bounds on
the average correlation (see Lemma \ref{lem:cor-2-sda} for a proof). In Section \ref{sec:pairwise-correlations} we describe a pairwise-correlation version of our bounds. It is sufficient for some applications and generalizes the
statistical query dimension from learning theory (see Section \ref{sec:relate-2-sq} for the details).
However, to obtain our nearly tight lower bounds for planted biclique, we will need to bound the average pairwise correlation directly, and with significantly better bounds than what is possible from pairwise correlations alone.

We are now ready to define the concept of statistical dimension.
We first define the statistical dimension with average correlation of a set of distributions relative to some reference distribution. It captures the complexity of distinguishing distributions in $\D$ from $D$.
\begin{definition}\label{def:sdima-set}
  For $\bar{\gamma}>0$, domain $X$, a set of distributions $\D$ over $X$ and a reference distribution $D$ over $X$
  the \textbf{statistical dimension} of $\D$ relative to $D$ with average correlation $\bar{\gamma}$ is defined to be
   the largest value $d$ such that for any subset $\D' \subseteq \D$, where $|\D'| \ge |\D|/d$, $\rho(\D',D) \leq \bar{\gamma}$. We denote it by
   $\SDA(\D,D,\bar{\gamma})$.
\end{definition}
Intuitively, the definition says that any $1/d$ fraction of the set of distributions has low pairwise correlation; the largest such $d$ is the statistical dimension.

For general search problems over distributions we define the statistical dimension by reducing it to the statistical dimension of some set of input distributions relative to some reference distribution.
\begin{definition}\label{def:sdima}
  For $\bar{\gamma}>0$, domain $X$, a search problem $\Z$ over a set of solutions $\F$
  and a class of distributions $\D$ over $X$, let $d$ be the largest value
  such that there exists a {\em  reference} distribution $D$ over $X$ and a finite set of distributions $\D_D \subseteq \D$ with the following property:
  for any solution $f\in \F$ the set $\D_f = \D_D \setminus \Z_f$ is non-empty and
  $\SDA(\D_f,D,\bar{\gamma})\geq d$.
   We define the \textbf{statistical dimension} with average correlation $\bar{\gamma}$ of $\Z$
   to be $d$ and denote it by $\SDA(\Z,\bar{\gamma})$.
\end{definition}
The statistical dimension with average correlation $\bar{\gamma}$ of a search problem over distributions gives
a lower bound on the complexity of any deterministic statistical algorithm for the
problem that uses queries to $\VSTAT(1/(3\bar{\gamma}))$.
\begin{theorem}\label{thm:avgvstat}
  Let $X$ be a domain and $\Z$ be a search problem over a set of solutions $\F$
  and a class of distributions $\D$ over $X$. For $\bar{\gamma} > 0$ let $d = \SDA(\Z,\bar{\gamma})$.
  Any SQ algorithm requires at least $d$ calls to $\VSTAT(1/(3\bar{\gamma}))$ oracle to solve $\Z$.
\end{theorem}
In Section \ref{sec:lower-bound-vstat} we give a refinement of $\SDA$, by introducing a parameter which additionally bounds the size of the set $\D_f$ (and not just that it is non-empty). This refined notion
allows us to extend the lower bound to randomized SQ algorithms.
In Section \ref{sec:vstat-to-sample} we use this refined notion of $\SDA$ to also show that (with high probability) one can simulate $t$ samples of $\SAMPLE$ using
$\VSTAT(O(t))$. This implies that lower bounds on $\SDA$ imply lower bounds on the number of queries required by any \sampleoracle algorithm (Theorem \ref{thm:avgsample-v}).

In Section \ref{sec:app2SQ} we show that our bounds generalize and strengthen the known results for SQ learning that are based on SQ-DIM \citeP{BlumFJKMR94,Yang05}. In the statement below, the statistical dimension $\SDA(\C, D', \bar{\gamma})$ uses the average pairwise correlation of Boolean functions from a set $\C$ relative to a distribution $D'$ over a domain $X'$, that is $\la f_1,f_2\ra_{D'}$, where $f_1,f_2\in \C$ (rather than distributions as in the definitions above). It is formally defined in Section \ref{sec:app2SQ} and is always at least as large as the statistical query dimension used in earlier work in learning theory.

\begin{theorem}
\label{thm:general-learning}
Let $\C$ be a set of Boolean functions, $D'$ be a distribution over $X'$ and let $d=\SDA(\C,D',\bar{\gamma})$ for some $\bar{\gamma} > 0$. Then any SQ algorithm that, with probability at least $2/3$, learns $\C$ over $D'$ with error $\eps < 1/2-\sqrt{1/(3\bar{\gamma})} $ requires at least $d/3-1$ queries to $\VSTAT(1/(3\bar{\gamma}))$.
\end{theorem}

At a high level, our proof works as follows. The first step of the proof is a reduction from a decision problem in which the algorithm only needs to distinguish all the distributions in the set
$\D_D$ (except those in $\Z_f$ for some $f$) from the reference distribution $D$. To distinguish between distributions the algorithm needs to ask a query $g$ such that
$\E_D[g]$ cannot be used as a response of $\VSTAT(1/(3\bar{\gamma}))$ for $D' \in \D_f$. In the key component of the proof we show that if a query function $g$ to
$\VSTAT(1/(3\bar{\gamma}))$ distinguishes between a distribution $D$ and any distribution $D' \in \D'$, then $\D'$ must have average correlation of at least $\bar{\gamma}$ relative
to $D$. The condition that for any $|\D'| \ge |\D_f|/d$, $\rho(\D',D) < \bar{\gamma}$ then immediately implies that at least $d$ queries are required to distinguish any distribution in
$\D_f$ from $D$. We remark that an immediate corollary of this proof technique is that the decision problem in which the algorithm needs to decide whether the input distribution is in
$\D_f$ or is equal to the reference distribution $D$ also has statistical dimension at least $d$. We elaborate on this in Theorem \ref{thm:avgvstat-random-decision} where we give a
simplified version of our lower bound for decision problems of this type.

\subsection{Applications to the Planted Biclique Problem}
We prove the following lower bound for the distributional planted biclique problem.
\begin{theorem}\label{thm:planted-stat}
  For any constant $\delta > 0$, any $k \leq n^{1/2 - \delta}$ and $r > 0$, at least
  $n^{\Omega(\log{r})}$ queries to $\VSTAT(n^2/(rk^2))$ are required to solve the distributional planted $k$-biclique with probability at least $2/3$. In particular, no
  polynomial-time statistical algorithm can solve the problem using queries to $\VSTAT(o(n^2/k^2))$ and any SQ algorithm requires $n^{\Omega(\log{n})}$ queries to
  $\VSTAT(n^{2-\delta}/k^2)$. This lower bound also applies to the problem of distinguishing any planted $k$-biclique distribution from the uniform distribution over $\zo^n$ (no planting).
\end{theorem}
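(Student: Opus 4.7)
The plan is to lower-bound $\SDA(\Z,\bar\gamma)$ for $\bar\gamma = rk^2/n^2$ and then invoke Theorem~\ref{thm:avgvstat}. I will take the reference distribution $D$ to be the uniform distribution $U$ on $\zo^n$ and the witnessing family $\D_D := \{D_S : S\in\binom{[n]}{k}\}$. A direct computation of densities gives
$$\frac{D_S(x)-U(x)}{U(x)} = -\frac{k}{n} + \frac{k}{n}\,2^k\,\mathbf{1}\{x_i=1\ \forall i\in S\},$$
and then, expanding and using $\E_U[\mathbf{1}\{x_i=1\ \forall i\in S\cup T\}] = 2^{-(2k-|S\cap T|)}$, one arrives at the clean pairwise correlation
$$\left\langle\frac{D_S-U}{U},\frac{D_T-U}{U}\right\rangle_U = \frac{k^2}{n^2}\bigl(2^{|S\cap T|}-1\bigr).$$
All these correlations are nonnegative, so for any subfamily $\D'$ corresponding to $\mathcal{S}'\subseteq\binom{[n]}{k}$ of size $M$ one has $\rho(\D',U) = (k/n)^2\,M^{-2}\sum_{S,T\in\mathcal{S}'}(2^{|S\cap T|}-1)$ with no absolute values to worry about.

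To certify $\SDA(\Z,\bar\gamma)\geq d$, it is therefore enough to show the following combinatorial claim: for every $\mathcal{S}'\subseteq\binom{[n]}{k}$ with $M\geq \binom{n}{k}/(2d)$ (which follows from $|\D'|\geq |\D_f|/d$ since $|\D_f|\geq\binom{n}{k}-1$),
$$\frac{1}{M^2}\sum_{S,T\in\mathcal{S}'}\bigl(2^{|S\cap T|}-1\bigr) < r. \qquad (\star)$$
Writing $2^{|S\cap T|}-1 = \sum_{j=1}^{|S\cap T|}2^{j-1}$ and swapping sums gives $\sum_{S,T}(2^{|S\cap T|}-1) = \sum_{j\geq 1}2^{j-1}N_j$ where $N_j := \#\{(S,T)\in(\mathcal{S}')^2:|S\cap T|\geq j\}$. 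I split at threshold $j_0 := \lfloor\log_2(r/4)\rfloor$. For $j\leq j_0$, the trivial bound $N_j\leq M^2$ contributes at most $2^{j_0}\leq r/4$ to $(\star)$. For $j>j_0$, bound $N_j\leq M\cdot\binom{k}{j}\binom{n-j}{k-j}$ (for each fixed $S$, any $T$ with $|T\cap S|\geq j$ is obtained by choosing a $j$-subset of $S$ and extending to a $k$-set), then use the identity $\binom{n-j}{k-j}/\binom{n}{k} = k^{\underline{j}}/n^{\underline{j}}\leq (k/n)^j$ together with $M\geq\binom{n}{k}/(2d)$ and $\binom{k}{j}\leq k^j/j!$ to obtain
$$\frac{1}{M^2}\sum_{j>j_0}2^{j-1}N_j \;\leq\; d\sum_{j>j_0}\frac{(2k^2/n)^j}{j!}.$$
Since $k\leq n^{1/2-\delta}$ implies $2k^2/n \leq 2n^{-2\delta}\ll 1$, the tail is dominated by its first term, giving an upper bound $\lesssim d\cdot(2k^2/n)^{j_0+1}/(j_0+1)!$. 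Demanding this to be $<r/4$ (so that together with the small-$j$ part it is $<r$) permits $d$ as large as $(j_0+1)!\,(n^{2\delta}/4)^{j_0+1} = n^{2\delta\log_2 r - O(1)} = n^{\Omega(\log r)}$, with the hidden constant depending on $\delta$.

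This establishes $\SDA(\Z,rk^2/n^2)\geq n^{\Omega(\log r)}$, and Theorem~\ref{thm:avgvstat} then yields at least this many queries to $\VSTAT(1/(3\bar\gamma)) = \VSTAT(n^2/(3rk^2))$; the factor $3$ is absorbed by replacing $r$ with $3r$. The two specializations in the statement follow by instantiating $r$: taking any slowly growing $r$ (e.g.\ $r=\log^2 n$) yields a super-polynomial lower bound against $\VSTAT(o(n^2/k^2))$, and $r=n^{\delta}$ gives $n^{\Omega(\log n)}$ queries to $\VSTAT(n^{2-\delta}/k^2)$. For randomized algorithms one invokes the refined notion of $\SDA$ from Section~\ref{sec:lower-bound-vstat}; this costs nothing here because $|\D_f|\geq\binom{n}{k}-1$ is essentially all of $\D_D$. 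The main technical obstacle is the combinatorial bound $(\star)$: the threshold $j_0\approx\log_2 r$ must be tuned so that the large-$j$ tail factor $(2k^2/n)^{j_0+1}/(j_0+1)!$ shrinks fast enough, thanks to $k\leq n^{1/2-\delta}$, to absorb the factor $d$. Everything else is a direct density calculation plus a routine invocation of Theorem~\ref{thm:avgvstat}.
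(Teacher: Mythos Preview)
Your proof is correct and follows the same overall architecture as the paper: compute the pairwise correlations $\langle \hat D_S,\hat D_T\rangle_U$, bound the average correlation of any large subfamily, and invoke Theorem~\ref{thm:avgvstat}. Your density computation is in fact slightly sharper than the paper's Lemma~\ref{lem:clique-correlation-bound} (you get the exact value $(k/n)^2(2^{|S\cap T|}-1)$, while the paper only records the upper bound $2^{|S\cap T|}k^2/n^2$).

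The combinatorial step is where the two proofs differ. The paper (Lemma~\ref{lemma:pccalc}) fixes one reference set $S$ and argues that the sum $\sum_{S_i\in A}\langle\hat D_S,\hat D_i\rangle$ is maximized when $A$ consists of the sets of largest overlap with $S$; it then controls the contribution level-by-level via the ratio $|T_j|/|T_{j+1}|\gtrsim n^{2\delta}$ and telescopes. Applying this bound to every row yields the bound on $\rho(\D',D)$. You instead work directly with the double sum, rewrite $2^{|S\cap T|}-1=\sum_{j\ge1}2^{j-1}\mathbf 1\{|S\cap T|\ge j\}$, split at a threshold $j_0\approx\log_2 r$, and treat the two regimes separately: the trivial bound $N_j\le M^2$ below threshold, and the counting bound $N_j\le M\binom{k}{j}\binom{n-j}{k-j}$ above it. Both routes exploit exactly the same gain, namely that $k^2/n\le n^{-2\delta}$, and both arrive at $\SDA\ge n^{\Theta(\delta\log r)}$. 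Your argument is a bit more self-contained (no row-by-row reduction), while the paper's version makes the extremal structure more explicit; neither buys anything quantitatively over the other.
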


This bound is essentially tight. For every index in the planted set $S$, the probability that the corresponding bit of a randomly chosen point is set to $1$ is $1/2 + k/(2n)$, whereas for
every index not in $S$, this probability is $1/2$. Therefore using $n$ queries to $\VSTAT(16n^2/k^2)$ (i.e., of tolerance $k/4n$) it is easy to recover $S$. Indeed, this can be done by using the query functions $h_i(x)=x_i$, for each $i\in [n]$. So, the answers of the $\VSTAT$ oracle represent the
expected value of the $i$th bit over the sample.

There is also a SQ algorithm that uses $n^{O(\log{n})}$ queries to $\VSTAT(25n/k)$ (corresponding to a significantly smaller number of samples) to find the planted set for any $k \geq \log n$. In fact, the
same algorithm can be used for the standard planted clique problem that achieves complexity $n^{O(\log n)}$. We enumerate over subsets $T\subseteq [n]$ of $\log n$ indices and
query $\VSTAT(25n/k)$ with the function $g_T: \{0,1\}^n\rightarrow \{0,1\}$ defined as $1$ if and only if the point has ones in all coordinates in $T$.
Therefore, if the set $T$ is included in the planted set then
\[
\E_D[g_T] = \frac{k}{n} \cdot 1 + \left(1- \frac{k}{n}\right)2^{-\log n} \in \left[\frac{k}{n}, \frac{k+1}{n}\right].
\]
With this expectation, $\VSTAT(25n/k)$ has tolerance at most $\sqrt{k(k+1)/25n^2} \le (k+1)/5n$ and
will return at least $k/n - (k+1)/(5n) > 3k/(4n)$. If, on the other hand, at least one element of $T$ is not from the planted set, then $\E_D[g_T] \leq k/(2n) + 1/n$ and
$\VSTAT(25n/k)$ will return at most $(k+2)/(2n) + (k+2)/(5n) < 3k/(4n)$. Thus, we will know all
$(\log n)$-sized subsets of the planted set and hence the entire planted set. We remark that this algorithm demonstrates the difference between $\STAT$ and $\VSTAT$ oracles.
Implementing this algorithm using the $\STAT$ oracle would require  tolerance of $\Omega(k/n)$ which corresponds to $O(n^2/k^2)$ samples. This is the same tolerance as the
polynomial-time degree-based algorithm needs (estimate degree of each vertex), so one cannot hope to have a superpolynomial lower bound against $\STAT(k/n)$.

To summarize, $n$ samples directly correspond to having access to
$\VSTAT(O(n))$. The discussion above shows that the distributional planted biclique problem can be  solved in
polynomial time when $k=\Omega(\sqrt{n})$. At the same time, Theorem \ref{thm:planted-stat}  implies that for $k \leq
n^{1/2 - \delta}$, any SQ algorithm will require
$n^{\Omega(\log{n})}$ queries to $\VSTAT(n^{1+\delta})$.

We now turn to stating our bounds for \sampleoracle algorithms.
\begin{theorem}\label{thm:planted-strong}
  For any constant $\delta > 0$ and any $k\leq n^{1/2 - \delta}$,
  any \sampleoracle algorithm that with probability at least $2/3$ can distinguish between the uniform distribution and any planted $k$-biclique distribution requires
  $\Omega(n^2/k^2)$ queries to $\SAMPLE$.
\end{theorem}

Each query of a \sampleoracle algorithm uses a new sample from $D$. Therefore this bound implies that any algorithm that does not reuse samples will require
$\Omega(n^2/k^2)$ samples. To place this bound in context, we note that it is easy to detect whether a biclique of size $k$ has been planted using $\tilde{O}(n^2/k^2)$ samples
(as before, to detect if a coordinate $i$ is in the planted set we can compute the average of $x_i$ on $\tilde{O}(n^2/k^2)$ samples). Of course, finding all coordinates in the set would require
reusing samples (which \sampleoracle algorithms cannot do). Note that $n^2/k^2 \leq n$ if and only if $k \geq \sqrt{n}$.

A closely related problem is the {\em planted densest subgraph} problem,
where edges in the planted subset appear with higher probability than
in the remaining graph. This is a variant of the densest $k$-subgraph
problem, which itself is a natural generalization of $k$-clique that
asks to recover the densest $k$-vertex subgraph of a given $n$-vertex
graph
\citeP{Feige02,Khot04a,BhaskaraCCFV10,BhaskaraCVGZ12}.  The conjectured
hardness of its average case variant, the planted densest subgraph
problem, has been used in public key encryption schemes
\citeP{ApplebaumBW10} and in analyzing parameters specific to
financial markets \citeP{AroraBBG10}.   We define the following distributional version of this problem:
\begin{problem} \label{def:densegraph} Fix $0 < q < p \le 1$. For $1
  \le k \le n$, let $S\subseteq [n]$ be a set of $k$
  vertex indices and $D_S$ be a distribution over $\{0,1\}^n$ such
  that when $x \sim D_S$, with probability $1-(k/n)$ the entries of
  $x$ are independently $q$-biased Bernoulli
  variables, and with probability $k/n$ the $k$ coordinates in $S$ are
  independently chosen $p$-biased Bernoulli variables,
  and the rest are independently chosen $q$-biased
  Bernoulli variables. The \textbf{distributional $(p,q)$-planted
    densest $k$-subgraph} problem is to find the unknown subset $S$
  given access to samples from $D_S$.
\end{problem}
Our approach and lower bounds extend in a straightforward manner to this problem. In Section \ref{sec:densest} we analyze this general setting and give lower bounds for all settings of $p$ and $q$. Here we describe a special case of our lower bounds when $q=1/2$ and $p=1/2+\alpha$. Our lower bound becomes exponential as $\alpha$ becomes (inverse-polynomially) close to $0$. Specifically:
\begin{corollary}\label{thm:densest-stat-intro}
  For any constant $\delta > 0$, any $k \leq n^{1/2 - \delta}$, $\alpha > 0$, $\ell
  \leq \min\{k,1/(4\alpha^2)\}$, at least
  $n^{\Omega(\ell)}$ queries to $\VSTAT(n^2/(48\ell \alpha^2 k^2))$ are
  required to solve the distributional $(1/2+\alpha,1/2)$-planted
    densest $k$-subgraph with probability at least $2/3$.
\end{corollary}
For example, consider the setting $k=\ell=n^{1/3}$ and $\alpha = n^{-1/4}$. It is not hard to see that for this setting the problem can be solved on a random bipartite graph with $n$ vertices on both sides (in exponential time). Our lower bound for this setting implies that at least $n^{\Omega(n^{1/3})}$ queries to $\VSTAT(n^{3/2})$ will be required.
 Additional corollaries for the distributional $(p,q)$-planted densest $k$-subgraph can be found in Section \ref{sec:densest}.

\paragraph{Relation to the planted $k$-biclique problem}
The upper and lower bounds we described for statistical algorithms match the state of the art for the average-case planted $k$-biclique and planted $k$-clique problems. Moreover,
our lower bounds for the distributional versions of the planted $k$-biclique problem have implications for the hardness of the average-case planted $k$-biclique problem.  An instance of the latter problem is a random $n\times n$ bipartite graph with a $k \times k$ biclique planted randomly. In Appendix \ref{sec:avg-to-dist-clique}, we show that the average-case planted $k$-biclique is
equivalent to our distributional planted $k$-biclique with $n$ samples. Specifically, a single sample corresponds to the adjacency list of a vertex on the left, and $n$ samples correspond
to the adjacency matrix of the bipartite graph. By this equivalence, an algorithm that solves the average-case planted bipartite $k$-clique problem will also solve the distributional
planted $k$-biclique with $n$ samples. Our lower bounds for the distributional problem therefore imply that the planted $k$-biclique problem would require a non-statistical approach, i.e., one for which there is no statistical analogue.

\eat{
Before describing the lower bounds derived using our approach we address the following
 natural question: what do superpolynomial statistical lower bounds for a problem say about the performance of any given algorithm that only uses summations of functions over
 samples? If the algorithm uses each sample only for a single function evaluation then our lower bounds against unbiased statistical algorithms give a lower bound on the sample
 complexity (and hence running time) of any such algorithm for the problem. In most cases, however, such an algorithm is not statistical in the formal sense since it is likely using its
 samples more than once and is not based on oracles. Therefore our bounds do not constitute a proof that such an algorithm fails to solve the problem. At the same time they indicate
 that a proof of correctness or other formal performance analysis which relies on concentration of the sums used by the algorithm (as is almost always the case) will not be possible. They
 also imply that the algorithm is not robust to even tiny perturbations of the values of sums (which can arise from noise for example). Both of these points give strong evidence that the
 algorithm is unlikely to be successful on all input distributions.
}

\subsection{Subsequent Work}
In subsequent work, Feldman, Perkins and Vempala \citeY{FeldmanPV:13} introduced a  notion of statistical dimension that is based on the spectral norm of the correlation matrix of large sets of
distributions. It is always at least as large as the average correlation-based dimension defined here and also leads to lower bounds on the complexity of SQ
algorithms using $\VSTAT$.   
Using this dimension they proved tight lower bounds on the complexity of statistical algorithms for planted $k$-SAT and Goldreich's  pseudo-random generator. In addition, they described
statistical algorithms based on power iteration with nearly matching upper bounds.
Finally, they demonstrate that lower bounds against SQ algorithms can be used to derive concrete lower bounds for convex relaxations of the problem.

\citeAN{FeldmanPV:13} have also extended the lower bounds against $\SAMPLE$ to lower bounds against the $k$-bit version of $\SAMPLE$ at the expense of factor $2^k$ blow-up in the number of queries. Steinhardt, Valiant and Wager \citeY{SteinhardtVW16} gave a more direct approach for proving lower bounds against this oracle that is closely related to the techniques here and in \citeAN{FeldmanPV:13}. They have further showed that statistical queries can be used to simulate the oracle that that extracts $k$ bits from each sample in an interactive way (rather than at once).

Building on our approach, \citeAN{Feldman:16sqd} described new notions of statistical dimension and proved that they tightly characterize the SQ complexity of solving general search problems over distributions for both $\STAT$ and $\VSTAT$ oracle. He also simplified the analysis of $\VSTAT(t)$ by showing that it is equivalent (up to constant factors) to returning any value $v$ such that $|\sqrt{v} - \sqrt{\E_D[h]}|\leq 1/\sqrt{t}$.
Some additional recent applications of SQ lower bounds that are related to our work include learning of the Ising model \citeP{BreslerGS14a}, convex optimization \citeP{FeldmanGV:15} and distribution-independent PAC learning of lines over finite fields \citeP{Feldman:16sqd}.

The distributional planted $k$-biclique problem introduced here is a simple and natural problem that shows a remarkable property: information-theoretically it can be solved with many fewer samples
than is necessary for any known efficient algorithm (and no efficient statistical algorithm exists). In particular, any algorithm that solves our problem with less than $n$ samples will also
solve the average-case $k$-biclique problem (that is at least as hard as the usual planted $k$-clique problem). In several more recent works, reductions from the planted clique
problem were used to demonstrate a similar phenomenon in a number of important problems in statistics and machine learning
\citeP{BerthetR:13,MaW13a,HajekWX15,GaoMZ14,WangBS15,CaiLR15}.

\section{Lower Bounds from Statistical Dimension}\label{sec:stat-dim}
In this section we prove the general lower bounds. In later sections, we will compute the parameters in these bounds for specific problems of interest.
\subsection{Lower Bounds for Statistical Query Algorithms}
\label{sec:lower-bound-vstat}
We start by proving Theorem \ref{thm:avgvstat} which is the basis of all our lower bounds. In fact, we will prove a stronger version of this theorem which also applies to randomized
algorithms. For this version we need an additional parameter in the definition of SDA.
\begin{definition}\label{def:sdima-random}
  For $\bar{\gamma}>0$, $\eta > 0$, domain $X$ and a search problem $\Z$ over a set of solutions $\F$
  and a class of distributions $\D$ over $X$, let $d$ be the largest value
  such that there exists a {\em  reference} distribution $D$ over $X$ and a finite set of distributions $\D_D \subseteq \D$ with the following property:
  for any solution $f\in \F$ the set $\D_f = \D_D \setminus \Z_f$ has size at least $(1-\eta) \cdot |\D_D|$ and $\SDA(\D_f,D,\bar{\gamma}) \geq d$. We define the statistical dimension with average correlation $\bar{\gamma}$ and solution set bound $\eta$ of $\Z$ to be $d$ and denote it by
$\SDA(\Z,\bar{\gamma},\eta)$.
\end{definition}
Note that for any $\eta < 1$, $\SDA(\Z,\bar{\gamma}) \geq \SDA(\Z,\bar{\gamma},\eta)$ and for $\eta = 1-1/|\D_D|$, we get $\SDA(\Z,\bar{\gamma}) =
\SDA(\Z,\bar{\gamma},\eta)$,
where $\D_D$ is the set of distributions that maximizes $\SDA(\Z,\bar{\gamma})$.
\begin{theorem}
\label{thm:avgvstat-random}
  Let $X$ be a domain and $\Z$ be a search problem over a set of solutions $\F$
  and a class of distributions $\D$ over $X$. For $\bar{\gamma} > 0$ and $\eta \in (0,1)$ let $d = \SDA(\Z,\bar{\gamma},\eta)$.
  Any randomized SQ algorithm that solves $\Z$ with probability $\alpha > \eta$ requires at least $\frac{\alpha -\eta}{1-\eta}  d$ calls to $\VSTAT(1/(3\bar{\gamma}))$.
\end{theorem}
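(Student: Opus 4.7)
The plan is to combine Yao's minimax principle with the distinguishing argument behind Theorem~\ref{thm:avgvstat}, tracking carefully how the solution-set parameter $\eta$ enters the accounting. By Yao's principle, it suffices to fix any deterministic statistical algorithm making $q$ queries $h_1,\ldots,h_q$ to $\VSTAT(1/(3\bar{\gamma}))$ and show that if the algorithm succeeds on at least a $\delta$-fraction of inputs drawn uniformly from the set $\D_D$ witnessing $\SDA(\Z,\bar{\gamma},\eta)=d$, then $q \geq \frac{\delta-\eta}{1-\eta}\,d$. Following the proof of Theorem~\ref{thm:avgvstat}, I would first consider the ``reference execution'' in which the oracle replies $\E_D[h_i]$ to every query; since the algorithm is deterministic, the sequence of queries and the resulting output $f^* \in \F$ are then uniquely determined.

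Next I would identify the set of inputs that cannot be distinguished from $D$ by any of the $q$ queries,
\[
A \;=\; \left\{ D' \in \D_D : \left|\E_{D'}[h_i] - \E_D[h_i]\right| \leq \tau_i(D') \text{ for every } i\in[q]\right\},
\]
where $\tau_i(D') = \max\{1/t,\sqrt{p(1-p)/t}\}$ with $p = \E_{D'}[h_i]$ and $t = 1/(3\bar{\gamma})$ is the $\VSTAT(1/(3\bar{\gamma}))$ tolerance. For every $D' \in A$ the adversarial oracle may legally return $\E_D[h_i]$ on each query, so the algorithm follows the reference trajectory and outputs $f^*$; if in addition $D' \in \D_{f^*}$, then $f^* \notin \Z(D')$ and the algorithm fails. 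Hence the number of successes is at most $|\D_D| - |A \cap \D_{f^*}|$, and the assumed success rate $\delta$ forces $|A \cap \D_{f^*}| \leq (1-\delta)|\D_D|$. Using the solution-set bound $|\D_{f^*}| \geq (1-\eta)|\D_D|$ from Definition~\ref{def:sdima-random},
\[
|\D_{f^*} \setminus A| \;\geq\; |\D_{f^*}| - (1-\delta)|\D_D| \;\geq\; \left(1 - \frac{1-\delta}{1-\eta}\right) |\D_{f^*}| \;=\; \frac{\delta-\eta}{1-\eta}\, |\D_{f^*}|.
\]
The factor $1/(1-\eta)$, as opposed to the cruder $\delta-\eta$, appears precisely because the comparison is against $|\D_{f^*}|$ rather than $|\D_D|$, which is exactly what the $\SDA$ quantifier asks for.

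The last step is pigeonhole plus the correlation lemma. Every $D' \in \D_{f^*}\setminus A$ is distinguished from $D$ by some query, so some single query $h_{i^*}$ distinguishes a subset $\D^\star \subseteq \D_{f^*}$ of size $|\D^\star| \geq |\D_{f^*}\setminus A|/q \geq \frac{\delta-\eta}{(1-\eta)q}|\D_{f^*}|$. I would then invoke the key correlation lemma underlying Theorem~\ref{thm:avgvstat}: any set of distributions each distinguished from $D$ by a single query to $\VSTAT(1/(3\bar{\gamma}))$ has average correlation $\rho(\D^\star,D) \geq \bar{\gamma}$. Definition~\ref{def:sdima-random} says that any $\D'' \subseteq \D_{f^*}$ with $|\D''| \geq |\D_{f^*}|/d$ satisfies $\rho(\D'',D) < \bar{\gamma}$, forcing $|\D^\star| < |\D_{f^*}|/d$. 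Comparing with the pigeonhole lower bound on $|\D^\star|$ gives $q > \frac{\delta-\eta}{1-\eta}\,d$, which establishes the claim.

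The main obstacle is the correlation lemma just invoked: converting a single-query $\VSTAT(1/(3\bar{\gamma}))$-distinguishing event into a lower bound on the average pairwise correlation of $\D^\star$. The $\VSTAT$ tolerance $\max\{1/t,\sqrt{p(1-p)/t}\}$ is adaptive to the query's bias, so a uniform Cauchy--Schwarz bound on $\langle(D'-D)/D,h\rangle_D$ with a fixed norm of $h$ does not capture both regimes at once; the argument must split $\D^\star$ according to whether the distinguishing gap is driven by the $1/t$ term or by $\sqrt{p(1-p)/t}$ and control the resulting contributions to $\sum_{D_1,D_2}|\langle(D_1-D)/D,(D_2-D)/D\rangle_D|$ accordingly. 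This decomposition is the technical heart of Theorem~\ref{thm:avgvstat}; once it is granted, extending to the randomized setting amounts only to the Yao reduction and the $\eta$-sensitive accounting above.
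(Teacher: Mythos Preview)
Your proposal is correct and follows essentially the same route as the paper: Yao's reduction to a deterministic algorithm, the reference execution answering every query with $\E_D[h_i]$, the $\eta$-sensitive accounting showing that at least a $\frac{\delta-\eta}{1-\eta}$ fraction of $\D_{f^*}$ must be distinguished by some query, and pigeonhole combined with the single-query correlation bound to conclude. The paper organizes the last step as two claims ($\sum_k |A_k| \geq m$ and $|A_k| \leq |\D_f|/d$), but this is equivalent to your pigeonhole on $\D_{f^*}\setminus A$.

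One small correction about the ``main obstacle'': the paper does \emph{not} split $\D^\star$ according to which branch of $\max\{1/t,\sqrt{p(1-p)/t}\}$ is active. Instead, Lemma~\ref{lem:flip-probabilily} absorbs both regimes into a single uniform bound $|p_{i,k}-p_k| \geq \sqrt{\min\{p_k,1-p_k\}/(3t)}$ expressed in terms of the \emph{reference} value $p_k=\E_D[h_k]$; since $h_k$ is $\{0,1\}$-valued, $\|h_k\|_D^2=p_k$, and this factor cancels against the Cauchy--Schwarz upper bound on $\Phi^2$. So no case decomposition of $\D^\star$ is needed---the work is done once, at the level of a pointwise inequality between $p_k$ and $p_{i,k}$.
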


Theorem \ref{thm:avgvstat} is obtained from Theorem \ref{thm:avgvstat-random} by setting $\alpha = 1$ and using any $1-1/|\D_D| \leq \eta < 1$. Further, for any $\eta < 1$,
$\SDA(\Z,\bar{\gamma}) \geq \SDA(\Z,\bar{\gamma},\eta)$ and therefore for any $\eta < 1$, a bound on $\SDA(\Z,\bar{\gamma},\eta)$ can be used in Theorem \ref{thm:avgvstat}
in place of bound on $\SDA(\Z,\bar{\gamma})$. We now prove Theorem \ref{thm:avgvstat-random}.
\begin{proof}[of Theorem \ref{thm:avgvstat-random}]
  We prove our lower bound by exhibiting a distribution over inputs (which are distributions over $X$) for which every deterministic SQ algorithm that solves $\Z$ with probability
  $\alpha$ (over the choice of input) requires at least $(\alpha-\eta) \cdot d/(1-\eta)$ calls to $\VSTAT(1/(3\bar{\gamma}))$. The claim of the theorem will then follow by Yao's minimax
  principle \citeP{Yao:1977}.

 Using the notation of Definition \ref{def:sdima-random}, let $D$ be the reference distribution and $\D_D$ be a set of distributions for which the value $d$ is achieved.  Let $\A$ be a deterministic SQ algorithm that uses $q$ queries to
  $\VSTAT(1/(3\bar{\gamma}))$ to solve $\Z$ with probability $\alpha$ over the  random and uniform choice of a distribution from $\D_D$.
  Consider the execution of $\A$ in which to each query $h$ of $\A$, the oracle returns exactly $\E_D[h]$ and let $f$ denote the output.
  Let the set $\D_D^+ \subseteq \D_D$ be the set of
  distributions on which $\A$ is successful for all valid responses of $\VSTAT(1/(3\bar{\gamma}))$.
   Let $\D^+ =\D_f \cap \D_D^+$ (recall that we defined $\D_f = \D_D \setminus \Z_f$). We observe that $\D^+ =\D_D^+ \setminus (\D_D \setminus \D_f)$ and therefore
  \equ{|\D^+| \geq |\D_D^+| - |\D_D \setminus \D_f| \geq \alpha |\D_D| - |\D_D \setminus \D_f| = \frac{\alpha |\D_D| - |\D_D \setminus \D_f|}{|\D_D| - |\D_D \setminus \D_f|}
  |\D_f| \geq \frac{\alpha-\eta}{1-\eta} |\D_f| \label{eq:success-bound}.}

  By the definition of $\SDA(\Z,\bar{\gamma})$, it holds that $\SDA(\D_f,D,\bar{\gamma}) \geq d$. In Lemma \ref{lem:hard-to-distinguish} given below, we will show that under the conditions of this proof, $\SDA(\D_f,D,\bar{\gamma}) \geq d$ implies that $\A$ must use at least $q \geq d|\D^+|/|\D_f|$ queries. By inequality (\ref{eq:success-bound}), $q \geq \frac{\alpha-\eta}{1-\eta} \cdot d$ giving the desired lower bound.

\end{proof}

The proof of Theorem \ref{thm:avgvstat-random} relies on the following lemma that translates a lower bound on $\SDA(\D_f,D,\bar{\gamma})$ into a lower bound on the number of queries that $\A$ needs to use. Its proof is based on ideas from \citeP{Szorenyi09} and \citeP{Feldman:12jcss}.
\begin{lemma}
\label{lem:hard-to-distinguish}
  Let $X$ be a domain and $\Z$ be a search problem over a set of solutions $\F$
  and a class of distributions $\D$ over $X$. Let $\A$ be a (deterministic) SQ algorithm for $\Z$ that uses at most $q$ queries to $\VSTAT(1/(3\bar{\gamma}))$. For a distribution $D$, consider the execution of $\A$ on $D$ in which to each query $h$ of $\A$, the oracle returns exactly $\E_D[h]$ and let $f$ denote the output. For a set of distributions $\D_f \subseteq \D \setminus\Z_f$ and $\bar{\gamma} > 0$, let $d= \SDA(\D_f,D,\bar{\gamma})$. Let $\D^+$ be the set of all distributions in $\D_f$ for which $\A$ successfully solves $\Z$ for all valid responses of $\VSTAT(1/(3\bar{\gamma}))$. Then $q \geq d\cdot|\D^+|/|\D_f|$.
\end{lemma}
\begin{proof}
   Let $h_1,h_2,\ldots,h_q$  be the queries asked by $\A$ when executed on $D$ with the exact responses of the oracle. Let  $m= |\D^+|$ and we denote the distributions in $\D^+$
  by $\{D_1,D_2,\ldots,D_m\}$.  For every $k \leq q$, let
  $A_k$ be the set of all distributions $D_i$ such
  that $$\left|\E_{D}[h_k(x)] - \E_{D_i}[h_k(x)]\right| > \tau_{i,k}
  \doteq \max\left\{\frac{1}{t},
    \sqrt{\frac{p_{i,k}(1-p_{i,k})}{t}}\right\},$$ where we use $t$ to
  denote $1/(3\bar{\gamma})$ and $p_{i,k}$ to denote
  $\E_{D_i}[h_k(x)].$ To prove the desired bound we first prove the
  following two claims:
  \begin{enumerate}
  \item $\sum_{k\leq q}|A_k| \geq m$;
  \item for every $k\leq q$, $|A_k| \leq |\D_f|/d$.
  \end{enumerate}
  Combining these two implies that $q |\D_f|/d \geq m$ or, equivalently, $q \geq d|\D^+|/|\D_f|$.

  In the rest of the proof for conciseness we drop the subscript $D$
  from inner products and norms. To prove the first claim we assume, for the sake of contradiction, that there exists $D_i \not\in \cup_{k\leq q} A_k$. Then for every $k\leq q$,
  $|\E_{D}[h_k(x)] - \E_{D_i}[h_k(x)]| \leq \tau_{i,k}$. This implies that $\E_{D}[h_k(x)]$ are within $\tau_{i,k}$ of $\E_{D_i}[h_k(x)]$. By the
  definition of $\VSTAT(t)$, this implies that the responses we used in our execution of $\A$ on $D$ are also valid responses of $\VSTAT(t)$ when $\A$ is executed on $D_i$. The output of this execution is $f$ and hence it must be a valid solution for $D_i$. This contradicts the definition of $\D^+$ since it is a subset of $\D_f\subseteq \D \setminus \Z_f$.

To prove the second claim, suppose that for some $k\in [d]$, $|A_k| > |\D_f|/d$. Let $p_k = \E_{D}[h_k(x)]$ and assume that $p_k \leq 1/2$ (when $p_k>1/2$ we just replace $h_k$
by $1-h_k$ in the analysis below). First we note that:
\begin{align*}
  \E_{D_i}[h_k(x)] - \E_{D}[h_k(x)] = \E_D~\left[~\frac{D_i(x)}{D(x)}~
    h_k(x)\right]-\E_{D}[h_k(x)]= \left\langle h_k,
    \frac{D_i}{D}-1\right\rangle = p_{i,k}-p_k.
\end{align*}
  Let $\hat{D}_i(x)=\frac{D_i(x)}{D(x)}-1$, (where the convention is
  that $\hat{D}_i(x)=0$ if $D(x)=0$).  We will next show upper and
  lower bounds on the following quantity
$$ \Phi = \left\la h_k, \sum_{D_i\in A_k}  \hat{D}_i\cdot \sgn \langle h_k, \hat{D}_i\rangle \right\rangle.$$
By Cauchy-Schwartz we have that
\begin{eqnarray}
  \Phi^2 = \left\la h_k, \sum_{D_i\in A_k}  \hat{D}_i\cdot \sgn \langle h_k, \hat{D}_i \rangle \right\rangle^2
\nonumber
&\leq& \| h_k\|^2 \cdot \left\|\sum_{D_i \in A_k}  \hat{D}_i\cdot \sgn \langle h_k, \hat{D}_i\rangle\right\|^2  \\
&\leq& \| h_k\|^2 \cdot \left( \sum_{D_i,D_j \in A_k} \left| \la \hat{D}_i, \hat{D}_j \ra \right| \right) \nonumber \\
&\leq& \| h_k\|^2 \cdot \rho(A_k,D) \cdot |A_k|^2 . \label{eq2ndm-a}
\end{eqnarray}

We also have that
\begin{eqnarray}
  \Phi^2 = \left\la h_k, \sum_{D_i\in A_k}  \hat{D}_i\cdot \sgn \langle h_k, \hat{D}_i\rangle \right\rangle^2&=& \left(\sum_{D_i\in A_k} \la h_k, \hat{D}_i \ra \cdot \sgn \langle
  h_k, \hat{D}_i \ra\right)^2 \nonumber \\
  &\ge& \left(\sum_{D_i\in A_k} |p_{i,k}-p_k| \right)^2. \label{eq:tausq-a}
\end{eqnarray}
To evaluate the last term of this inequality we use the fact that $|p_{i,k}-p_k| \geq \tau_{i,k} = \max\{1/t, \sqrt{p_{i,k}(1-p_{i,k})/t}\}$.
Next we use a simple fact (proved in Lemma \ref{lem:flip-probabilily} below) that $|p_{i,k}-p_k| \geq \max\{1/t, \sqrt{p_{i,k}(1-p_{i,k})/t}\}$ implies that $|p_{k}-p_{i,k}| \geq \sqrt{\frac{\min\{p_k,1-p_k\}}{3t}}$ to obtain: For every $D_i \in A_k$, \equ{|p_{k}-p_{i,k}| \geq \sqrt{\frac{\min\{p_k,1-p_k\}}{3t}} = \sqrt{\frac{p_{k}}{3t}}. \label{eq:pk-diff}}
By substituting equation (\ref{eq:pk-diff}) into (\ref{eq:tausq-a}) we get that $\Phi^2 \geq \frac{p_{k}}{3t} \cdot |A_k|^2$.

We note that, $h_k$ is a $[0,1]$-valued function and therefore $\|h_k\|^2 \leq p_k$. Substituting this into equation (\ref{eq2ndm-a}) we get that $\Phi^2 \leq p_k \cdot \rho(A_k,D)
\cdot |A_k|^2$.  By combining these two bounds on $\Phi^2$ we obtain that $\rho(A_k,D) \geq 1/(3t) = \bar{\gamma}$ which
contradicts the definition of $\SDA$.
\end{proof}
\begin{remark}
\label{rem:lowerbound-stat}
We remark that for algorithms using the $\STAT$ oracle, the proof can be simplified somewhat.
For $\tau = \sqrt{\bar{\gamma}}$,  $$\Phi^2
\geq \left(\sum_{D_i\in A_k} |p_{i,k}-p_k| \right)^2 \geq \tau^2
|A_k|^2$$ and the proof could be obtained by directly combining
equations (\ref{eq2ndm-a}) and (\ref{eq:tausq-a}) to get a
contradiction. This also eliminates the factor of $3$ in the
bound and the assumption that queries are $[0,1]$-valued can be relaxed to $[-1,1]$-valued queries since it suffices that
$\|h_k\|^2 \leq 1$. This leads to an identical lower bound on the number of queries for $\STAT(\sqrt{\bar{\gamma}})$ in place of $\VSTAT(1/(3\bar{\gamma}))$.
\end{remark}

We now prove a bound on the distance between any $p\in [0,1]$ and $p'$ which is returned by $\VSTAT(t)$ on a query with expectation
$p$ in terms of $p'$ that we used in the proof of Lemma~\ref{lem:hard-to-distinguish}.
\begin{lemma}
\label{lem:flip-probabilily}
For an integer $t$ and any $p \in [0,1]$, let $p' \in [0,1]$ be such that $|p'-p| \geq \tau = \max\left\{\frac{1}{t},
    \sqrt{\frac{p(1-p)}{t}}\right\}$. Then $|p'-p| \geq \sqrt{\frac{\min\{p',1-p'\}}{3t}}$.
\end{lemma}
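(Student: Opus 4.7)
I will write $\epsilon := |p'-p|$ and translate the hypothesis as the two inequalities $\epsilon \ge 1/t$ and $p(1-p) \le t\epsilon^{2}$, while the desired conclusion becomes $\min\{p',1-p'\} \le 3t\epsilon^{2}$. By the symmetry $p \mapsto 1-p,\ p' \mapsto 1-p'$ (which preserves both hypotheses and the target inequality), I may assume without loss of generality that $p' \le 1/2$, so that $\min\{p',1-p'\} = p'$ and the goal reduces to $p' \le 3t\epsilon^{2}$. A single convenient consequence of $\epsilon \ge 1/t$ that I will use repeatedly is $\epsilon \le t\epsilon^{2}$.

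I then split on whether $p \le 1/2$ or $p > 1/2$. When $p \le 1/2$, the variance bound gives $p/2 \le p(1-p) \le t\epsilon^{2}$, so $p \le 2t\epsilon^{2}$, and then a triangle inequality together with the elementary bound above yields $p' \le p + \epsilon \le 2t\epsilon^{2} + \epsilon \le 3t\epsilon^{2}$. This case is entirely routine.

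The main obstacle is the case $p > 1/2$, where the variance bound only controls $1-p$, not $p$, so a direct triangle-inequality argument no longer reaches $p'$. The idea I will use is that since $p' \le 1/2 < p$, we automatically have $\epsilon = p - p' \ge p - 1/2 = 1/2 - (1-p)$. Inserting the variance estimate $1-p \le 2(1-p)p \le 2t\epsilon^{2}$ (valid because $p \ge 1/2$) gives $\epsilon + 2t\epsilon^{2} \ge 1/2$, and one more application of $\epsilon \le t\epsilon^{2}$ turns this into $3t\epsilon^{2} \ge 1/2 \ge p'$. Combining the two cases proves $p' \le 3t\epsilon^{2}$, which is the required estimate.
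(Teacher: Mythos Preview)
Your proof is correct. The overall strategy---a symmetry reduction followed by a two-case split and the repeated use of the elementary consequence $\epsilon \le t\epsilon^{2}$ of $\epsilon \ge 1/t$---is the same as the paper's, but your decomposition is organized differently: you symmetrize on $p'$ (reducing to $p' \le 1/2$) and then split on whether $p \le 1/2$ or $p > 1/2$, whereas the paper symmetrizes on $p$ (reducing to $p \le 1/2$) and splits on the sign of $p'-p$.

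Your decomposition is in fact the cleaner of the two. In the paper's second case ($p \le 1/2$, $p' > p$) the displayed chain asserts $\max\{1/t,\sqrt{p(1-p)/t}\} \ge p(1-p)$, which is false whenever $p(1-p) > 1/t$ (e.g.\ $t=100$, $p=0.1$); the subsequent conclusions $p'-p \ge p/2$ and $p'-p \ge p'/3$ also fail in such examples, even though the lemma itself remains true. Your argument sidesteps this entirely: by bounding $p$ (or $1-p$) first via $p(1-p) \le t\epsilon^{2}$ and only then invoking $\epsilon \le t\epsilon^{2}$, each intermediate inequality holds in every instance.
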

\begin{proof}
First note that our conditions and bounds do not change if we replace both $p$ and $p'$ with $1-p$ and $1-p'$, respectively. Therefore it is sufficient to prove the bound when $p \leq
1/2$.
We know that $|p'-p| \geq \tau = \max\{1/t, \sqrt{p(1-p)/t}\}$.
If $p \geq 2p'/3$ then certainly $$|p'-p| \geq \sqrt{\frac{p(1-p)}{t}} \geq \sqrt{\frac{\frac{2}{3}p' \cdot \frac{1}{2}}{t}} = \sqrt{\frac{p'}{3t}}.$$ Otherwise (when $p < 2p'/3$),
$p' - p \geq p' - 2p'/3 = p'/3$.
We also know that $|p-p'| \geq \tau \geq 1/t$ and therefore $|p-p'| \geq \sqrt{\frac{p'}{3t}}$.
\end{proof}

\subsubsection{Decision Problems}
For decision problems, our dimension and lower bounds can be simplified. We denote by $\B(\D,D)$ a decision problem in which the input distribution $D'$ either equals $D$ or belongs to $\D$ and the goal of the algorithm is to identify whether $D' = D$ or $D'\in \D$. For example, for the distributional planted $k$-biclique problem, the decision version is to determine whether the given input distribution corresponds to a planted $k$-biclique or to one with no planting (uniform distribution on $\{0,1\}^n$.

For the decision problem $\B(\D,D)$ our notion of dimension simplifies to the following.
\begin{definition}\label{def:sdim-decision}
  For $\bar{\gamma}>0$, domain $X$ and a decision problem $\B(\D,D)$, let $\SDA(\B(\D,D),\bar{\gamma})$ be defined as the largest value $d$
  such that there exists a finite set of distributions $\D_D \subseteq \D$ such that $\SDA(\D_D,D,\bar{\gamma}) = d$.
\end{definition}

Our technique gives the following lower bound for decision problems:
\begin{theorem}
\label{thm:avgvstat-random-decision}
  Let $D$ be a distribution and $\D$ be a set of distributions over a domain $X$ such that for some $\bar{\gamma}$, $\SDA(\B(\D,D),\bar{\gamma})=d$. Any (randomized) SQ algorithm that solves $\B(\D,D)$ with success probability $\alpha > 1/2$ requires at least $(2\alpha - 1) d$ queries to $\VSTAT(1/(3\bar{\gamma}))$.
\end{theorem}
\begin{proof}
 As before, we exhibit a hard distribution over input distributions for which every deterministic SQ algorithm that solves $\B(\D,D)$ with probability $\alpha$ (over the choice of input) requires at least $(2\alpha - 1) d$ queries to $\VSTAT(1/(3\bar{\gamma}))$.
 Let $\D_D$ be the set of distributions that witnesses the statistical dimension, namely,  $\SDA(\D_D,D,\bar{\gamma}) = d$. Consider the following distribution over the input distribution $D'$: $D'$ equals $D$ with probability $1/2$ and $D'$ equals a random uniform element of
  $\D_D$ with probability $1/2$.

$\A$ has success probability $\alpha > 1/2$ and therefore, when executed on $D$ with exact responses to queries, it must correctly identify $D$ (say it outputs 0 in this case). We then define $\D^+ \subseteq \D_D$ as the set of  distributions on which $\A$ is successful (that is outputs $1$). The probability of
success of $\A$ implies that $|\D^+| \geq (2\alpha -1) | \D_D|$. Now, the set $\D_D$ is included in the set of distributions $\D$ for which $0$ is not a valid solution. Therefore we can apply Lemma \ref{lem:hard-to-distinguish} with $\D_f = \D_D$ to obtain that the number of queries to  $\VSTAT(1/(3\bar{\gamma}))$ is $q \geq (2\alpha -1)d$.
\end{proof}

\subsection{Statistical Dimension Based on Pairwise Correlations}
\label{sec:pairwise-correlations}

In addition to $\SDA$ which is based on average correlation we introduce a simpler notion based on pairwise correlations. It is sufficient for some  applications and is easy to relate to
SQ-DIM used in learning (as we do in Section \ref{sec:app2SQ}). 
\begin{definition} \label{def:correl}

We say that a set of $m$ distributions $\D = \{D_1,\ldots,D_m\}$ over $X$ is $(\gamma,\beta)$-correlated relative to a distribution $D$ over $X$ if:
  $$\left| \left\langle \frac{D_i}{D}-1, \frac{D_j}{D}-1
      \right\rangle_D \right| \leq
\begin{cases}
  &\beta  \mbox{ for } i=j\in[m]\\
  &\gamma \mbox{ for } i\not = j \in[m].
\end{cases}
$$
\end{definition}

\begin{definition}\label{def:sdim}
  For $\gamma,\beta>0$, domain $X$ and a search problem $\Z$ over a set of solutions $\F$
  and a class of distributions $\D$ over $X$, let $m$ be the largest integer such that there exists
  a {\em reference} distribution $D$ over $X$ and a finite set of distributions $\D_D \subseteq \D$ such that
  for any solution $f\in \F$, $\D_f=\D_D\setminus \Z_f$ is  $(\gamma,\beta)$-correlated relative to $D$ and $|\D_f| \geq m$.
We define the \textbf{statistical dimension} with pairwise correlations $(\gamma,\beta)$ of $\Z$
to be $m$ and denote it by $\SD(\Z,\gamma,\beta)$.

For decision problems $\SD(\B(\D,D),\gamma,\beta)$ is defined as the largest integer $m$
  such that there exists a set of distributions $\D_D \subseteq \D$ of size $m$ that is $(\gamma,\beta)$-correlated relative to $D$.
\end{definition}

It is easy to bound $\SDA$ of any $(\gamma,\beta)$-correlated set of distributions.
\begin{lemma}
\label{lem:cor-2-sda}
 Let $\D=  \{D_1,D_2,\ldots,D_m\}$ be a $(\gamma,\beta)$-correlated set of distributions relative to a distribution $D$. Then for every $\gamma' >0$,
  $\SDA(\D,D,\gamma+\gamma') \geq \frac{m \gamma'}{\beta - \gamma}$.
\end{lemma}
\begin{proof}
  Take $d = m \gamma'/(\beta - \gamma)$; we will prove that
  $\SDA(\D,D, \gamma +\gamma') \ge d$. Consider a set of distributions $\D' \subseteq \D$, where $|\D'| \ge |\D|/d \geq
  m/d = (\beta-\gamma) / \gamma'$:
  \begin{align*}
    \rho(\D',D) &= \frac{1}{|\D'|^2} \sum_{D_1,D_2 \in \D'}
    \left|\left\langle \frac{D_1}{D}-1, \frac{D_2}{D}-1
      \right\rangle_D \right| \\
    & \le \frac{1}{|\D'|^2} \left( |\D'| \beta + (|\D'|^2 - |\D'|) \gamma
    \right) \\
    & \le \gamma + \frac{\beta - \gamma}{|\D'|} \\
    & \le \gamma +\gamma'
  \end{align*}
\end{proof}

As an immediate corollary we obtain a bound on $\SDA$ of a search or decision problem from a bound on $\SD$.
\begin{corollary}
\label{cor:sd-genlb}
 Let $X$ be a domain and $\Z$ be a search or decision problem over a set of solutions $\F$
  and a class of distributions $\D$ over $X$. For $\gamma,\beta>0$, let $m = \SD(\Z,\gamma,\beta)$.  Then for every $\gamma' >0$,
  $\SDA(\Z, \gamma+ \gamma') \geq \frac{m \gamma'}{\beta - \gamma}$.
\end{corollary}

We now apply Theorem \ref{thm:avgvstat} to obtain the following lower bound on SQ algorithms in terms of $\SD$.
\begin{corollary}\label{cor:sd-genlb-decision}
 Let $X$ be a domain and $\Z$ be a search or decision problem over a set of solutions $\F$
  and a class of distributions $\D$ over $X$. For $\gamma,\beta>0$, let $m = \SD(\Z,\gamma,\beta)$.
  For any $\gamma' > 0$, any SQ algorithm requires at least $m \gamma'/(\beta-\gamma)$ queries to the $\STAT(\sqrt{\gamma+\gamma'})$ or $\VSTAT(1/(3(\gamma+\gamma')))$ oracle to solve $\Z$.
\end{corollary}
\eat{
Using Theorem \ref{thm:avgvstat-random-decision} we can also obtain a version of this bound for the decision problem $\B_{\mathrm{unif}}(\D,D)$.
\begin{corollary}
\label{cor:sd-random-decision}
  Let $D$ be a distribution and $\D$ be a set of $m$ distributions over a domain $X$ that is $(\gamma,\beta)$-correlated over $D$.  Any (randomized) SQ algorithm that solves $\B_{\mathrm{unif}}(\D,D)$ with success probability $\alpha > 1/2$ over the input distribution
  and randomness of the algorithm requires at least $(2\alpha - 1) m \gamma/(\beta-\gamma)$ calls to $\STAT(\sqrt{2\gamma})$ or $\VSTAT(1/(6\gamma))$ oracle.
\end{corollary}
}
In this corollary if, for example,  $\SD(\Z,\gamma=\frac{m^{-2/3}}{2},\beta =1) \geq m$ then at least  $m^{1/3}/2$ queries to $\VSTAT(m^{2/3}/3)$ or $\STAT(m^{-1/3})$ oracle are required to solve the problem.

\subsection{Lower Bounds for $1$-bit Sampling Algorithms}
\label{sec:vstat-to-sample}
Next we address lower bounds on algorithms that use the $\SAMPLE$ oracle. We recall that the $\SAMPLE$ oracle returns the value of a function on a single randomly chosen point. To
estimate the expectation of a function, an algorithm can simply query this oracle multiple times with the same function and average the results. 

We note that responses of $\SAMPLE$ do not have the room for the possibly adversarial deviation afforded by the tolerance of the $\STAT$ and $\VSTAT$ oracles. The ability to use
these slight deviations in a coordinated way is used crucially in our lower bounds against $\VSTAT$ and in all known lower bounds for SQ learning algorithms. While it is possible to derive lower bounds against \sampleoracle algorithms using $m$ queries from lower bounds against algorithms that use $O(m)$ queries to $\STAT(1/m)$ \citeP{Ben-DavidD98}, such lower bound will not suffice for our main application. It would only imply the trivial lower bound of $\Omega(n/k)$ queries to $\SAMPLE$ for the planted $k$-biclique problem.
Proving tighter lower bounds against \sampleoracle algorithms directly is harder and indeed lower bounds for the equivalent Honest SQ learning model required a substantially
more involved argument than lower bounds for the regular SQ model \citeP{Yang05}.

Our lower bounds for \sampleoracle algorithms rely on a direct simulation of the $\SAMPLE$ oracle using the $\VSTAT$ oracle. This simulation
allows us to derive lower bounds against \sampleoracle algorithms from Theorem
\ref{thm:avgvstat-random}. We also provide a reverse simulation of $\VSTAT$ oracle using $\SAMPLE$ oracle.

\eat{
 Next observe that $t \cdot p(1-p)$ is the variance of the
sum of $t$ random and independent samples of $h(x)$. For a sum of
Bernoulli random variables, this implies that an estimate of $\E_D[h]$
using $t$ samples will, with at least a constant probability, be
within $\sqrt{p(1-p)/t}$ of $\E_D[h]$. But also for $t > 1/(p(1-p))$,
with at least a constant probability, the error of the estimate will
be at least $\sqrt{p(1-p)/t}$, in other words one cannot get a better
estimate with high probability using $t$ samples. One extreme case we
also need to consider is when $t < 1/(p(1-p))$ (say $t<1/p$ for
simplicity). In this case an estimate obtained from $t$ samples is
likely to be 0 and we cannot distinguish (with more than a constant
probability) between two values $p$ and $p'$ which are both at most
$1/t$. Therefore it is necessary to set $\tau$ to be at least
$1/t$. It is also not hard to see that, for any two distributions $D$
and $D'$ and a function $h$, $\VSTAT(t)$ will give different answers
for query $h$ on distributions $D$ and $D'$ whenever it is possible to
distinguish (with at least a constant probability) $D$ from $D'$ using
the actual mean of $h$ on $t$ samples from $D$.
}

\begin{theorem}
\label{th:unbiased-from-vstat}
Let $\Z$ be a search problem and let $\A$ be a (possibly randomized) \sampleoracle algorithm that solves $\Z$ with probability at least $\alpha$ using $m$ samples from
$\SAMPLE$.  For any $\delta \in (0,1/4]$, there exists a SQ algorithm $\A'$ that uses at most $m$ queries to $\VSTAT(m/\delta^2)$ and solves $\Z$ with probability at least
$\alpha - \delta$.
\end{theorem}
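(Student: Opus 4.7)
My plan is to build $\A'$ by simulating, query by query, the bits that $\A$ would receive from the $\SAMPLE$ oracle. Concretely, $\A'$ runs $\A$ internally, sharing its coin tosses; whenever $\A$ issues a Boolean query $h_i$ (possibly depending on prior answers), $\A'$ forwards $h_i$ to $\VSTAT(m/\delta'^2)$ to obtain an estimate $\tilde p_i$ of $p_i = \E_D[h_i]$, draws a fresh independent Bernoulli bit $b_i$ of mean $\tilde p_i$, and returns $b_i$ to $\A$. After $m$ such exchanges, $\A'$ outputs whatever $\A$ does. By construction $\A'$ uses exactly $m$ queries to $\VSTAT(m/\delta'^2)$.

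To prove correctness I would compare the joint distribution over the entire transcript (coins, queries, bits, final output) under two processes: $P$, in which $\A$ interacts with a real $\SAMPLE$ oracle on $D$, and $Q$, in which $\A$ runs inside the simulator $\A'$. The success event is measurable with respect to the transcript, so $\Pr_Q[\text{success}] \ge \Pr_P[\text{success}] - TV(P,Q) \ge \delta - TV(P,Q)$; it therefore suffices to show $TV(P,Q) \le \delta'$.

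Squared Hellinger distance is well suited here because $P$ and $Q$ agree on everything except the mechanism for drawing each bit. By factoring the Hellinger affinity across the $m$ adaptive steps, a uniform per-step bound $H^2(\mathrm{Ber}(p_i), \mathrm{Ber}(\tilde p_i)) \le \eta$ yields $H^2(P,Q) \le m\eta$, which in turn gives $TV(P,Q) \le \sqrt{2 m \eta}$ via the standard $TV \le \sqrt{2}\,H$ inequality. Note that the per-step bound must hold for \emph{any} admissible $\tilde p_i$, since the $\VSTAT$ oracle's responses are adversarial within the tolerance interval.

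The technical core, and what I expect to be the main obstacle, is a uniform per-step bound $H^2(\mathrm{Ber}(p), \mathrm{Ber}(q)) = O(1/t)$ whenever $|p-q| \le \max\{1/t,\sqrt{p(1-p)/t}\}$. I would split into two regimes: if $p(1-p) \ge 1/t$, the inequalities $(\sqrt p - \sqrt q)^2 \le (p-q)^2/p$ and $(\sqrt{1-p}-\sqrt{1-q})^2 \le (p-q)^2/(1-p)$ combined with the $\sqrt{p(1-p)/t}$ slack make each piece $O(1/t)$; if instead $1/t$ is the active slack, then (up to swapping $p$ with $1-p$) both $p$ and $q$ are $O(1/t)$, so $(\sqrt p - \sqrt q)^2 \le p + q = O(1/t)$, while the complementary term contributes only $O(1/t^2)$. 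In either case $H^2 = O(1/t)$; taking $t = m/\delta'^2$ and absorbing constants gives $H^2(P,Q) = O(\delta'^2)$ and hence $TV(P,Q) \le \delta'$, completing the argument.
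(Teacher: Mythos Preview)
Your simulation and overall strategy coincide with the paper's: forward each Boolean query $h_i$ to $\VSTAT$, sample a fresh Bernoulli bit with the returned mean, and bound the total variation between the simulated and true transcripts. The paper, however, carries this out with the $\chi^2$ divergence rather than Hellinger. It shows (their Lemma~\ref{lem:vstat-ratio}) that for each step the expected likelihood ratio satisfies $\E_{b\sim B(1,p)}\!\left[\frac{\Pr[B(1,p)=b]}{\Pr[B(1,p')=b]}\right] \le 1+\tfrac{2}{t}$, then telescopes this multiplicatively along the adaptive transcript to get $\chi^2(\Pi_\A\|\Pi'_\A) \le (1+2/t)^m-1 \le 4m/t$, and finally converts via $TV \le \tfrac12\sqrt{\chi^2}$ (Lemma~\ref{lem:tv-from-ratio}) to obtain $TV \le \sqrt{m/t}=\delta'$ on the nose.

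Your Hellinger route is equally valid: the Bhattacharyya affinity does factor across adaptive steps exactly as you claim (so a uniform per-step bound $H^2\le\eta$ yields $H^2(P,Q)\le 1-(1-\eta)^m\le m\eta$), and your two-regime analysis for $H^2(\mathrm{Ber}(p),\mathrm{Ber}(q))=O(1/t)$ is correct. The only point to watch is the final ``absorbing constants'': with the bounds as you wrote them (e.g.\ $(\sqrt p-\sqrt q)^2\le p+q$ in the small-$p$ regime) you land at $TV\le c\,\delta'$ for a constant $c>1$, which yields the theorem with $\VSTAT(Cm/\delta'^2)$ rather than $\VSTAT(m/\delta'^2)$. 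A slightly sharper per-step estimate---for instance using $(\sqrt p-\sqrt q)^2\le|p-q|$ in the $1/t$-slack regime---brings the constant essentially to~$1$. Either way the difference is cosmetic; the paper's $\chi^2$ calculation just happens to hit the stated constant exactly.
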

Our proof relies on a simple simulation. Given query $h_1:X\rightarrow \{0,1\}$ from $\A$ to $\SAMPLE$, we make the same query $h_1$ to $\VSTAT(t)$ for $t=m/\delta^2$. Let
$p'_1$ be the response. We flip a coin with bias $p'_1$ (that is one that outputs $1$ with probability $p'_1$ and $0$ with probability $1-p'_1$) and return it to the algorithm. We do the
same for the remaining $m-1$ queries which we denote by $h_2,h_3,\ldots,h_m$. We then prove that the true $m$ samples of $\SAMPLE$ and our simulated coin flips are statistically
close by upper bounding the expected ratio of their density functions (which is equal to the $\chi^2$ divergence plus 1) . This implies that the success probability of the simulated algorithm is not much worse than that of the \sampleoracle algorithm.

In our proof we will, for simplicity and without loss of generality, assume that $\VSTAT(t)$ always outputs a value in the interval $[1/t,1-1/t]$. We can always replace a value $v$
returned by $\VSTAT(t)$ by $v'$ which is the closest value to $v$ in the above interval. It is easy to see that if $v$ is a valid answer of $\VSTAT(t)$ then so is $v'$.

We will need the following lemmas for our proof.
The first one bounds the total variation distance between two distributions in terms of the expected ratio of probability density functions.
\begin{lemma}
\label{lem:tv-from-ratio}
Let $D_1$ and $D_2$ be two distribution over a domain $X$ of finite\footnote{This assumption is simply for convenience of notation. It holds in our applications.} size such that
$D_2(x)$ is non-vanishing. Denote the total variation distance between $D_1$ and $D_2$ by $\Delta_{TV}(D_1,D_2)$.  Then $\Delta_{TV}(D_1,D_2) \leq \sqrt{\rho}/2$, where $\rho
= \E_{D_1}\left[\frac{D_1(x)}{D_2(x)}\right]-1$.
\end{lemma}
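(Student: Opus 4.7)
The plan is to recognize that the quantity $\rho$ defined in the lemma is exactly the $\chi^2$ divergence between $D_1$ and $D_2$, and then obtain the claimed bound via the classical Cauchy--Schwarz argument relating total variation to $\chi^2$.

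First I would rewrite $\rho$ in the form
\[
\rho \;=\; \E_{D_1}\!\left[\frac{D_1(x)}{D_2(x)}\right] - 1 \;=\; \sum_{x \in X} \frac{D_1(x)^2}{D_2(x)} - 1 \;=\; \sum_{x \in X} \frac{(D_1(x)-D_2(x))^2}{D_2(x)},
\]
where the last equality uses $\sum_x D_1(x) = \sum_x D_2(x) = 1$ and is a standard algebraic identity for $\chi^2$. Non-vanishing of $D_2$ makes every term well-defined.

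Next I would expand the total variation distance as $\Delta_{TV}(D_1,D_2) = \tfrac{1}{2}\sum_x |D_1(x) - D_2(x)|$ and factor each summand as $\frac{|D_1(x)-D_2(x)|}{\sqrt{D_2(x)}} \cdot \sqrt{D_2(x)}$. Applying Cauchy--Schwarz with these two factors yields
\[
\sum_{x} |D_1(x) - D_2(x)| \;\le\; \sqrt{\sum_{x} \frac{(D_1(x) - D_2(x))^2}{D_2(x)}} \cdot \sqrt{\sum_{x} D_2(x)} \;=\; \sqrt{\rho},
\]
and dividing by $2$ gives $\Delta_{TV}(D_1,D_2) \le \sqrt{\rho}/2$.

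There is essentially no obstacle here: the only mild care needed is to ensure $D_2$ is non-vanishing so that the pointwise ratios and the factorization used in Cauchy--Schwarz make sense, which is exactly the hypothesis of the lemma. The whole argument is a few lines once one spots the $\chi^2$ identity, and it is the tightest standard inequality relating $\Delta_{TV}$ to this quantity.
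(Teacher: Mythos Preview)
Your proof is correct and is essentially the same as the paper's: both identify $\rho$ as the $\chi^2$ divergence $\sum_x (D_1(x)-D_2(x))^2/D_2(x)$ and then bound $\Delta_{TV}$ by the $L^1$--$L^2$ inequality. The only cosmetic difference is that the paper phrases that last step as Jensen's inequality (applied to $t\mapsto t^2$ under $D_2$) rather than as an explicit Cauchy--Schwarz factorization, which amounts to the same thing.
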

\begin{proof}
The key observation is that the $\chi^2$-divergence between $D_1$ and $D_2$ is exactly the expected ratio minus 1.
\alequn{\rho  = \E_{D_1}\left[\frac{D_1(x)}{D_2(x)}\right]-1 = \E_{D_2}\left[\frac{D_1^2(x)}{D_2^2(x)}\right]-1 = \E_{D_2}\left[\frac{D_1^2(x)}{D_2^2(x)} -
2\frac{D_1(x)}{D_2(x)}+1\right] = \E_{D_2}\left[\left(\frac{D_1(x)}{D_2(x)}-1\right)^2 \right] .}

By Jensen's inequality this implies that $$\E_{D_2}\left[\left|\frac{D_1(x)}{D_2(x)}-1\right| \right] \leq \sqrt{\E_{D_2}\left[\left(\frac{D_1(x)}{D_2(x)}-1\right)^2 \right]} =
\sqrt{\rho}\ .$$

Finally, $$\Delta_{TV}(D_1,D_2) = \frac{1}{2} \sum_{x\in X}|D_1(x) -D_2(x)| = \frac{1}{2} \E_{D_2}\left[\left|\frac{D_1(x)}{D_2(x)}-1\right| \right] \leq \frac{\sqrt{\rho}}{2}.
$$
\end{proof}

The second lemma is that if $p'$ is an answer of $\VSTAT(t)$ for a query  $h$, such that $\E_D[h] = p$, then the expected ratio of density functions of Bernoulli random variables with biases $p$ and $p'$, denoted
$B(p)$ and $B(p')$,  is small.
\begin{lemma}\label{lem:vstat-ratio}
For an integer $t$ and $p \in [0,1]$ let $p' \in [1/t,1-1/t]$ such that $|p'-p| \leq \max\left\{\frac{1}{t},\sqrt{\frac{p(1-p)}{t}}\right\}$. Then
\begin{align*}
\E_{b\sim B(p)}\left(\frac{\Pr[B(p) = b]}{\Pr[B(p')=b]}\right)
\leq 1 + \frac{3}{t} .
\end{align*}
\end{lemma}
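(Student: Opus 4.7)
The plan is to evaluate the expectation in closed form, recognize the result as a $\chi^{2}$-divergence, and then bound it using the $\VSTAT$ tolerance condition together with the truncation $p'\in[1/t,1-1/t]$.

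First I will compute the expectation directly. Since $b$ takes value $1$ with probability $p$ and $0$ with probability $1-p$, and the ratio of probability mass functions equals $p/p'$ at $b=1$ and $(1-p)/(1-p')$ at $b=0$,
$$\E_{b\sim B(1,p)}\left[\frac{\Pr[B(1,p)=b]}{\Pr[B(1,p')=b]}\right]=\frac{p^{2}}{p'}+\frac{(1-p)^{2}}{1-p'}.$$
Combining over the common denominator $p'(1-p')$ and simplifying $p^{2}(1-p')+(1-p)^{2}p'=(p-p')^{2}+p'(1-p')$, I obtain $1+(p-p')^{2}/[p'(1-p')]$, i.e.\ exactly $1$ plus the $\chi^{2}$-divergence of $B(1,p)$ from $B(1,p')$.

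Second I will bound $(p-p')^{2}/[p'(1-p')]$ by $2/t$, splitting on which branch of $\tau=\max\{1/t,\sqrt{p(1-p)/t}\}$ governs and using the symmetry $(p,p')\mapsto(1-p,1-p')$ (which preserves both the hypothesis and the target ratio) to reduce to $p\le 1/2$. In the easy branch $|p-p'|\le 1/t$, the constraint $p'\in[1/t,1-1/t]$ yields $p'(1-p')\ge(1/t)(1-1/t)\ge 1/(2t)$ for $t\ge 2$, so the ratio is at most $(1/t^{2})\cdot(2t)=2/t$. In the harder branch $(p-p')^{2}\le p(1-p)/t$ and the target reduces to $p(1-p)\le 2\,p'(1-p')$; writing $p'=p+e$ with $|e|\le\sqrt{p(1-p)/t}$, expansion gives $p'(1-p')=p(1-p)+e(1-2p)-e^{2}$, and substituting the worst allowed $e$ while using $p'\ge 1/t$ to preclude the degenerate regime where $p$ greatly exceeds $p'$ yields the claim.

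The main obstacle will be the regime where $p$ is of order $1/t$, since there neither $p(1-p)$ nor $p'(1-p')$ admits any nontrivial lower bound apart from the boundary constraint $p'\ge 1/t$. In this regime the worst case is pinned at $p'=1/t$, so the problem reduces to a one-parameter optimization over $p\in[1/t,O(1/t)]$; the hypothesis $|p-p'|\le\sqrt{p(1-p)/t}$ then becomes a quadratic constraint in $p$ that forces $(p-p')^{2}/[p'(1-p')]$ to be at most an absolute constant times $1/t$, closing the argument.
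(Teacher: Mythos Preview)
Your computation of the expected ratio as $1+\frac{(p-p')^2}{p'(1-p')}$ and the symmetry reduction to $p\le 1/2$ are exactly what the paper does. The difference is in how the bound $\frac{(p-p')^2}{p'(1-p')}\le \frac{2}{t}$ is established.

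The paper does not split on which term of the $\max$ is active. Instead it proves, in one stroke, that $\max\bigl\{\tfrac{1}{t},\sqrt{p(1-p)/t}\bigr\}\le\sqrt{2p'(1-p')/t}$ by establishing two simple facts: (i) $p(1-p)\le 2p'(1-p')$, via a case split on the sign of $p'-p$ (if $p'>p$ then $p(1-p)\le p\le p'\le 2p'(1-p')$; if $p'\le p$ then $p'\ge 1/t$ together with $p-p'\le 1/t$ gives $p\le 2p'$, and since $1-p\le 1-p'$ this yields $p(1-p)\le 2p'(1-p')$); and (ii) $2p'(1-p')\ge 1/t$, which is immediate from $p'\in[1/t,1-1/t]$. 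Squaring the hypothesis then gives $(p-p')^2\le 2p'(1-p')/t$ directly, without any further case analysis.

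Your route can be made to work, but as you yourself flag, the ``harder branch'' via the expansion $p'(1-p')=p(1-p)+e(1-2p)-e^2$ does not close on its own when $e<0$ and $p(1-p)$ is only of order $1/t$: the target $p(1-p)+2e(1-2p)-2e^2\ge 0$ has a negative middle term of size up to $2\sqrt{p(1-p)/t}$, which dominates $p(1-p)$ precisely in that regime. Your third paragraph proposes to patch this by a boundary optimization at $p'=1/t$, which can be carried through but is unnecessary. The paper's sign-of-$(p'-p)$ split gives $p(1-p)\le 2p'(1-p')$ directly and avoids both the expansion and the separate boundary analysis.
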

\begin{proof}
If $b=1$, the ratio is $p/p'$ and when $b=0$, then it is $(1-p)/(1-p')$. Thus, the expected
ratio is $$\frac{p^2}{p'} + \frac{(1-p)^2}{1-p'} = 1+\frac{(p-p')^2}{p'(1-p')}\ .$$ We can assume without loss of generality that $p' \leq 1/2$.

Now if $p \leq 3 p'$ then $p(1-p) \leq 3 p'(1-p')$.  Otherwise (when, $p > 3p'$), we know that $p \geq 3p'\geq 3/t$. This implies that
  $p-p' \geq 2p/3 \geq 2/t$. This means that $$\frac{2p}{3} \leq p-p' \leq \sqrt{\frac{p(1-p)}{t}} \leq \sqrt{\frac{p}{t}}.$$
This can only be true when $p \leq (3/2)^2/t = 9/(4t)$, contradicting our assumption that $p\geq 3/t$.
 This implies that $$\max\left\{\frac{1}{t},\sqrt{\frac{p(1-p)}{t}}\right\} \leq \max\left\{\frac{1}{t},\sqrt{\frac{3p'(1-p')}{t}}\right\} \leq \sqrt{\frac{3p'(1-p')}{t}}\ .$$ By using
 this bound in the ratio equation we get that $$1+\frac{(p-p')^2}{p'(1-p')} \leq 1 + \frac{\frac{3p'(1-p')}{t}}{p'(1-p')} \leq 1+\frac{3}{t}\ .$$
\end{proof}

We can now complete the proof of Theorem \ref{th:unbiased-from-vstat}.
\begin{proof}[of Theorem \ref{th:unbiased-from-vstat}]
We simulate $\A$ using $\VSTAT(t)$ as described above. We now prove that for any algorithm the total variation distance between the true answers of $\SAMPLE$ and the simulated
distribution is at most $\delta$. Formally, let $R$ denote the set of all outcomes of $\A$'s random bits and for $r\in R$, let $\A^r$ denote the execution of $\A$ when its random bits
are set to $r$.
let $\Pi_\A$ denote the distribution over the $m$ bits obtained by the algorithm $\A$ when it is run with $\SAMPLE$ oracle. Similarly, let $\Pi'_\A$ denote the distribution over $\zo^m$
obtained by running the algorithm  $\A$ simulated using $\VSTAT(t)$ as above. By definition, $\Pi_\A = \E_{r\in R} \Pi_{\A^r}$ and similarly $\Pi'_\A = \E_{r\in R} \Pi'_{\A^r}$.
This implies that, $$\Delta_{TV}(\Pi_\A,\Pi'_\A) \leq \E_{r\in R}\left[\Delta_{TV}(\Pi_{\A^r},\Pi'_{\A^r})\right] \leq \max_{r\in R}\Delta_{TV}(\Pi_{\A^r},\Pi'_{\A^r}) .$$
The algorithm $\A^r$ is deterministic and it is therefore sufficient to prove the bound on total variation distance for deterministic algorithms. For conciseness we assume henceforth that
$\A$ is deterministic.

For any $i\in[m]$ let $\Pi_{\A_i}$ denote the probability distribution on the first $i$ samples of $\A$ executed with $\SAMPLE$.
For $j \leq i$ let $z^j$ denote the first $j$ bits of $z$. Let $\Pi_{\A_i}(z \cond z^{i-1})$ denote the probability that the first $i$ samples of $\A$ executed with $\SAMPLE$ oracle are
equal to $z$ conditioned on the probability that the first $i-1$ samples are equal to $z^{i-1}$. We define $\Pi'_{\A_i}(z)$  and $\Pi'_{\A_i}(z  \cond z^{i-1})$ analogously. We also
denote by $h_z$ the query that $\A$ asks after getting $z$ as the response to first $i$ samples and let $p_z = \E_D[h_z]$. Let $p'_z$ denote the response of $\VSTAT(t)$ on $h_z$.

For $i\in [m]$ and any $z \in \zo^i$, $\Pi_{\A_i}(z \cond z^{i-1}) = \Pr[B(p_{z^{i-1}})=z_i]$ and hence $\Pi_{\A_i}(z) = \Pi_{\A_{i-1}}(z^{i-1}) \Pr[B(p_{z^{i-1}})=z_i]$.
Similarly, $\Pi'_{\A_i}(z \cond z^{i-1}) = \Pr[B(p'_{z^{i-1}})=z_i]$ and $\Pi'_{\A_i}(z) = \Pi'_{\A_{i-1}}(z^{i-1}) \Pr[B(p'_{z^{i-1}})=z_i]$.
This implies that:
\alequn{\E_{z\sim \Pi_{\A_i}}\left[\frac{\Pi_{\A_i}(z)}{\Pi'_{\A_i}(z)}  \right] &= \E_{z\sim \Pi_{\A_i}}\left[\frac{\Pi_{\A_{i-1}}(z^{i-1})
\Pr[B(p_{z^{i-1}})=z_i]}{\Pi'_{\A_{i-1}}(z^{i-1}) \Pr[B(p'_{z^{i-1}})=z_i]}  \right] \\
&= \E_{y \sim \Pi_{\A_{i-1}}}\left[\frac{\Pi_{\A_{i-1}(y)}}{\Pi'_{\A_{i-1}(y)}}
\cdot \E_{b\sim B(p_y)}\left[\frac{\Pr[B(p_y)=b]}{\Pr[B(p'_y)=b]}\right] \right]\ .}
Now by Lemma \ref{lem:vstat-ratio}, this implies that for any $z$ of length  $i\in [m]$,
$$\E_{z\sim \Pi_{\A_i}}\left[\frac{\Pi_{\A_i}(z)}{\Pi'_{\A_i}(z)}  \right] \leq \E_{y \sim \Pi_{\A_{i-1}}}\left[\frac{\Pi_{\A_{i-1}(y)}}{\Pi'_{\A_{i-1}(y)}}  \right] \cdot
\left(1+\frac{2}{t}\right)\ .$$
Applying this iteratively we obtain that
$$\E_{z\sim \Pi_{\A}}\left[\frac{\Pi_\A(z)}{\Pi'_\A(z)}\right] \leq  \left(1+\frac{3}{t}\right)^m \leq e^{3m/t}\ .$$

By our definition, $t = m/\delta^2 \geq 16m$. Therefore, $3m/t \leq 1/5$ and hence $e^{3m/t} \leq 1+4m/t$.
By Lemma \ref{lem:tv-from-ratio}, we get that $\Delta_{TV}(\Pi_\A,\Pi'_\A) \leq \sqrt{(1+4m/t-1)}/2 = \sqrt{m/t} = \delta$. This implies that the success probability of $\A$ using
the simulated oracle is at least $\alpha - \delta$.

\end{proof}

We now combine Theorems \ref{thm:avgvstat-random-decision} and \ref{th:unbiased-from-vstat} to obtain the following lower bound for decision problems.
\begin{theorem}
\label{thm:avgsample-v-decision}
  Let $X$ be a domain and $D$ be a distribution over $X$ and $\D$ be a set of distributions over $X$. For $\bar{\gamma} > 0$, let $d = \SDA(\B(\D,D),\bar{\gamma})$. Any \sampleoracle algorithm that solves $\B(\D,D)$ with probability $\alpha$ requires at least $m$ queries to $\SAMPLE$ for $$m = \min\left\{\frac{d(2\alpha
  -1)}{2},\frac{(2\alpha+1)^2}{48\bar{\gamma}} \right\}\ .$$ In particular, any algorithm with success probability of at least $2/3$ requires at least
  $\min\{d/6,1/(432\bar{\gamma})\}$ queries to $\SAMPLE$.
\end{theorem}
\begin{proof}
Assuming the existence of a \sampleoracle algorithm using less than $m$ queries, we apply Theorem \ref{th:unbiased-from-vstat} for $\delta = (2\alpha-1)/4$ to simulate the
algorithm using $\VSTAT$. The bound on $m$ ensures that the resulting algorithm uses less than $d(2\alpha-1)/2$ queries to $\VSTAT(\frac{1}{3\bar{\gamma}})$ and has success probability of at least $\alpha - \delta = (2\alpha+1)/4$. By substituting these parameters into Theorem \ref{thm:avgvstat-random-decision} we obtain a contradiction.
\end{proof}

For general search problems this leads to the following lower bound.
\begin{theorem}
\label{thm:avgsample-v}
  Let $X$ be a domain and $\Z$ be a search problem over a set of solutions $\F$
  and a class of distributions $\D$ over $X$. For $\bar{\gamma} > 0$ and $\eta\in (0,1)$, let $d = \SDA(\Z,\bar{\gamma},\eta)$.
  Any (possibly randomized) \sampleoracle algorithm that solves $\Z$ with probability $\alpha$ requires at least $m$ calls to $\SAMPLE$ for $$m = \min\left\{\frac{d(\alpha
  -\eta)}{2(1-\eta)},\frac{(\alpha-\eta)^2}{12\bar{\gamma}} \right\}\ .$$ In particular, if $\eta \leq 1/6$ then any algorithm with success probability of at least $2/3$ requires at least
  $\min\{d/4,1/(48\bar{\gamma})\}$ queries to $\SAMPLE$.
\end{theorem}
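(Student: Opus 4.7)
The plan is to compose Theorems~\ref{th:unbiased-from-vstat} and~\ref{thm:avgvstat-random} directly. Suppose $\A$ is a (possibly randomized) unbiased statistical algorithm that makes $m$ queries to $\SAMPLE$ and solves $\Z$ with probability $\delta$. I may assume $\delta > \eta$, since otherwise the claimed minimum is nonpositive and the statement is vacuous. I will set $\delta' = (\delta - \eta)/2$, which lies in $(0, 1/2]$ because $0 \le \eta < \delta \le 1$. Applying Theorem~\ref{th:unbiased-from-vstat} with this $\delta'$ produces a randomized statistical algorithm $\A'$ that makes at most $m$ queries to $\VSTAT(m/\delta'^2)$ and solves $\Z$ with probability at least $\delta - \delta' = (\delta + \eta)/2 > \eta$.

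The next step is a case split according to the strength of the $\VSTAT$ oracle used by $\A'$, which is controlled by the unknown quantity $m$. If
$$ m \;\le\; \frac{\delta'^2}{3\bar{\gamma}} \;=\; \frac{(\delta - \eta)^2}{12\bar{\gamma}}, $$
then $m/\delta'^2 \le 1/(3\bar{\gamma})$, so $\VSTAT(m/\delta'^2)$ is no stronger than $\VSTAT(1/(3\bar{\gamma}))$. Since the set of valid response sequences of a stronger $\VSTAT$ oracle is contained in that of a weaker one, any algorithm that succeeds on every adversarial response sequence of the weaker oracle also succeeds on every adversarial response sequence of the stronger one. Hence $\A'$ solves $\Z$ with probability at least $(\delta + \eta)/2$ using $\VSTAT(1/(3\bar{\gamma}))$, and Theorem~\ref{thm:avgvstat-random} then yields
$$ m \;\ge\; \frac{(\delta + \eta)/2 - \eta}{1 - \eta} \cdot d \;=\; \frac{d(\delta - \eta)}{2(1 - \eta)}. $$
In the complementary case $m > (\delta - \eta)^2/(12\bar{\gamma})$, the second quantity in the minimum is realized automatically. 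Either way, $m$ is at least the minimum of the two terms, which is the main claim.

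The specialization to $\eta \le 1/6$ and $\delta \ge 2/3$ then follows from the crude estimates $\delta - \eta \ge 1/2$ and $1 - \eta \le 1$, which give $d(\delta-\eta)/(2(1-\eta)) \ge d(\delta-\eta)/2 \ge d/4$ and $(\delta-\eta)^2/(12\bar{\gamma}) \ge 1/(48\bar{\gamma})$. The argument is essentially a clean composition of the two preceding theorems, so I do not anticipate any substantive obstacle; the only delicate point is getting the direction of monotonicity of $\VSTAT$ in its precision parameter correct, which is what makes the case split work and justifies applying Theorem~\ref{thm:avgvstat-random} to $\A'$ precisely in the regime where $m$ is small.
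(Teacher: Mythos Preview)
Your proof is correct and follows essentially the same approach as the paper: set $\delta' = (\delta-\eta)/2$, apply Theorem~\ref{th:unbiased-from-vstat} to pass to a $\VSTAT(m/\delta'^2)$ algorithm with success probability $(\delta+\eta)/2$, and then invoke Theorem~\ref{thm:avgvstat-random}. The only cosmetic difference is that the paper phrases this as a contradiction (assuming $m$ is strictly less than both terms of the minimum), whereas you do an explicit case split on whether $m \le (\delta-\eta)^2/(12\bar{\gamma})$; the content and the monotonicity observation about $\VSTAT(t)$ in $t$ are the same.
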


To conclude, we formally state a simple reduction in the other direction, namely that $\VSTAT(t)$ oracle can be simulated using the $\SAMPLE$ oracle. It has been observed that, given a Boolean query function $h$ one can obtain an estimate of $\E_D[h]$ using $t = O(\log(1/\delta)/\tau^2)$ $1$-bit samples which with probability at
least $1-\delta$ will be within $\tau$ of $\E_D[h]$ \citeP{Ben-DavidD98}. Using the multiplicative Chernoff bound, it is not hard to see that $O(t\log(1/\delta))$ samples are sufficient to estimate $p=\E_D[h]$ within tolerance guaranteed by $\VSTAT(t)$. In addition, we will show how to use $\SAMPLE$ oracle to estimate the expectation of real-valued queries.

\begin{theorem}
\label{th:stat-from-unbiased}
Let  $t,q>0$ be any integers and $\delta>0$. There exists an algorithm $\A'$ that for any input distribution $D$ and any algorithm $\A$ that asks at most $q$ queries to $\VSTAT$, with probability at least $1-\delta$, provides valid for $\VSTAT(t)$ answers to all the queries of $\A$. $\A'$ uses $O(qt\cdot \log(q/\delta))$ queries to $\SAMPLE$ for the same input distribution $D$.
\end{theorem}
\begin{proof}
For every query $h:X \rightarrow [0,1]$ of $\A$, the algorithm $\A'$ estimates $p=\E_D[h]$ as follows. To generate a random Bernoulli variable with bias $p$, $\B$ draws $\theta \in [0,1]$ randomly and uniformly and defines: $h_\theta(x) = 1$ if $h(x) \leq \theta$ and $h_\theta(x)=0$ otherwise. It then makes the query $h_\theta$ to $\SAMPLE$. Observe that
$$\Pr_{\theta,x\sim D}[h_\theta(x)=1] = \E_{x\sim D}[\Pr_{\theta}[h_\theta(x)=1]]= \E_{x\sim D}[h(x)]=p.$$ The algorithm $\B$ repeats this $m$ times (each time choosing a new random $\theta$) and then answers the query $h$ with the mean of the obtained samples (for $m$ to be defined later). We denote the mean by $v$.

Assuming that $p \leq 1/2$, multiplicative Chernoff bounds imply that $$\Pr\left[|v-p| \geq \sqrt{p(1-p)/t}\right] \leq 2e^{-3mp/(p(1-p)t)}\leq 2e^{-6m/t} .$$
The bound in the case of $p>1/2$ follows from the symmetric argument.

Choosing $m=6t\cdot \ln(2q/\delta)$ ensures that $\Pr[|v-p| \geq \sqrt{p(1-p)/t}]  \leq \delta/q$. This implies that $q$ arbitrary queries for $\VSTAT(t)$ can be answered correctly with probability at least $1-\delta$ using $6qt\cdot \ln(2q/\delta)$ queries to $\SAMPLE$.
\end{proof}

\section{Warm-up: MAX-XOR-SAT}\label{sec:max-xor-sat}

In this section, we demonstrate our techniques on a warm-up problem, MAX-XOR-SAT. For this problem, it is sufficient to use pairwise correlations, rather than average correlations.

 For $\eps \geq 0$, the $\eps$-approximate \textbf{MAX-XOR-SAT} problem is defined as follows. Given samples from some unknown
 distribution $D$ over XOR clauses on $n$ variables, find an  assignment that maximizes up to additive error $\eps$ the probability a random clause drawn from $D$ is satisfied.

 In the worst case, it is known that MAX-XOR-SAT is NP-hard to
approximate to within $1/2-\delta$ for any constant $\delta$ \citeP{Hastad01}.  In practice, local search
algorithms such as WalkSat~\citeP{SelmanKC95} are commonly applied as
heuristics for maximum satisfiability problems.  We give strong evidence
that the distributional version of MAX-XOR-SAT is hard for
algorithms that locally seek to improve an assignment by flipping
variables as to satisfy more clauses, giving some theoretical
justification for the observations of~\citeP{SelmanKC95}.  Moreover,
our proof even applies to the case when there exists an assignment
that satisfies all the clauses generated by the target distribution.

The bound we obtain can be viewed as a restatement of the known lower bound for learning parities using statistical query algorithms (indeed, the problem of learning parities is a special
case of our distributional MAX-XOR-SAT).

 To formalize the search problem, we will denote by $C=\{0,1\}^n$ the set of XOR clauses in $n$ variables, such that for $c\in C$, if for $i\in [n]$ we have $c_i = 1$ then the $i$th variable appears in $c$, and otherwise it does not; for simplicity, no variables are negated in
the clauses.  Let $A=\{0,1\}^n$ denote the set of possible assignments to the variables. We will say that the assignment $a\in A$ satisfies the clause $c\in C$ if $a\cdot c=1$ (where $a\cdot c$ denotes the inner product modulo $2$).

Let ${\cal D}$ be the set of distributions over clauses in $C$. For a distribution $D\in {\cal D}$ and an assignment $a\in A$, let $f_D(a)=\E_{c\sim D} [a\cdot c]$ be the fraction of clauses that $a$ satisfies  under $D$. For $D\in {\cal D}$ let $M_D=\max_{a\in A} f_D(a)$.
The  MAX-XOR-SAT problem asks to find $a\in A$ that maximizes $f_D(a)$, given samples from an unknown distribution $D$.

We are now ready to formalize the search problem that we are interested in, using the  notation above and that of  Definition \ref{def:searchD}.

\begin{problem}\label{def:xorsat}($\eps$-approximate MAX-XOR-SAT)
Let $X=C=\{0,1\}^n$ (the set of clauses), $\D$ be the set of distributions over $X$,  $\F=A=\{0,1\}^n$ (the set of assignments). Let  $\Z: \D\rightarrow 2^\F$ be defined as $\Z(D)=\{a\in A \cond f_D(a)\geq M_D-\eps \}.$ 
\end{problem}

\begin{theorem}\label{thm:SDxorsat}
    For any $\delta>0$, any SQ algorithm requires at least $2^{n/3}-1$ queries to $\STAT(2^{-n/3})$ to solve $\left(\frac{1}{2}-\delta\right)$-approximate MAX-XOR-SAT.
\end{theorem}

We will first determine the statistical dimension of our search problem. This will immediately imply Theorem \ref{thm:SDxorsat} using Corollary \ref{cor:sd-genlb} (by choosing $\gamma' = 2^{-n/3}$).

\begin{lemma} For a $\delta>0$, let $\Z$ denote the $(1/2-\delta)$-approximate MAX-XOR-SAT. Then the statistical dimension of $\Z$ with pairwise correlation $(\gamma,\beta)=(0,1)$  is $\SD(\Z,0,1)\geq 2^n-1$.
\end{lemma}

\begin{proof}
We verify the properties of Definition \ref{def:sdim}.

Let the reference distribution $D=U_C$, the uniform distribution over $C=\{0,1\}^n$.
For $a\in A=\{0,1\}^n=\F$, let $D_a\in \D$ be the uniform distribution over $c\in C$ such that $a\cdot c=1$.  Let $\D_D=\{D_a\mid~ a\in A\}$, so $|\D_D|=2^n$.

For $a,b\in A$, we have
$$
f_{D_a}(b)=\E_{c\sim D_a}[b\cdot c]=\frac{1}{2^{n-1}}\left(  \sum_{c\in C ~\mid~a\cdot c=1, b\cdot c=1} 1\right).
$$
 Note that if $b=a$ then $f_{D_a}(b)=1$, and if $b\neq a$, $f_{D_a}(b)= 1/2$ (indeed, $|\{c\in C \mid a\cdot c=1, b\cdot c=1 \}| = 2^n/4$ since it is the size of two intersecting affine subspaces in $\{0,1\}^n$).

Therefore, for $a\in A$  and $\eps=1/2-\delta>0$, the set of solutions is $$\Z(D_a)=\{b\in A \cond f_{D_a}(b)\geq 1/2+\delta\}=\{a\}, $$ and so $\Z_a=\{D_a\}$.

To conclude the proof we will show that for any assignment $a\in \F=A$ the set
$\D_a=\D_D \setminus \{D_a\}$ of distributions is $(0,1)$-correlated (see Definition \ref{def:correl}).

Note that for $a\in A$, and $c\in C$,  $$\left(\frac{D_a}{D}-1\right)(c)=
\begin{cases}
  &-1\mbox{ if } c\cdot a=0\\
  &1 \mbox{ if } c\cdot a=1.
\end{cases}$$
In other words, $\left(\frac{D_a}{D}-1\right)(c) = -(-1)^{a \cdot c}$. A well-known (and easy to verify) property of $\{-1,1\}$-valued parity functions is that they are $(0,1)$-correlated over the uniform distribution. That is, for $a, b\in A$

 $$\left| \left\langle \frac{D_a}{D}-1, \frac{D_b}{D}-1
      \right\rangle_D \right| \leq
\begin{cases}
  &0  \mbox{ for } a=b\\
  &1 \mbox{ for } a\not = b .
\end{cases}
$$
\end{proof}

\section{Planted Biclique and Densest Subgraph}
\label{sec:strongclique}
\subsection{Statistical Dimension of Planted Biclique}
We now prove the lower bound claimed in
Theorem ~\ref{thm:planted-stat} on the
problem of detecting a planted $k$-biclique in the given distribution on vectors from $\{0,1\}^n$ as defined above.

Throughout this section we will use the following notation. For a subset $S \subseteq [n]$, let $D_{S}$ be the distribution over $\{0,1\}^n$ with a planted set $S$. Let $\cS_k$ denote the set of all ${n \choose k}$ subsets of
$[n]$ of size $k$ and $m = {n \choose k}$. We index the elements of $\cS_k$ in some arbitrary order as $S_1,\ldots,S_m$. For $i\in [m]$, we use $D_i$ to denote
$D_{S_i}$. We will also assume, whenever necessary, that $k$ and $n$ are larger than some fixed constant.

The reference distribution in our lower bounds will be the
uniform distribution over $\{0,1\}^n$ and let $\hat{D}_S$ denote
$D_S/D - 1$.  In order to apply our lower bounds based on statistical
dimension with average correlation we now prove that for the planted
biclique problem average correlations of large sets of distributions must be small. We
start with a lemma that bounds the correlation of two planted biclique
distributions relative to the reference distribution $D$ as a function
of the overlap between the planted sets:
\begin{lemma}
\label{lem:clique-correlation-bound}
For $i,j \in [m]$, $$\rho_D(D_i, D_j) = \left|\left\langle \hat{D}_i, \hat{D}_j \right\rangle_D\right| \leq \frac{2^\lambda k^2}{n^2}, $$ where $\lambda =  |S_i \cap S_j|$.
\end{lemma}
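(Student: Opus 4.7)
The plan is a direct computation of the inner product by first finding a clean closed form for the density ratio $\hat D_S = D_S/D - 1$, and then evaluating the expectation under the uniform measure $D$.

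First I would write down the density of $D_S$ against the uniform reference $D$. Since with probability $1-k/n$ the sample $x$ is uniform on $\{0,1\}^n$, and with probability $k/n$ the coordinates in $S$ are fixed to $1$ while the remaining $n-k$ are uniform, one gets
\begin{align*}
D_S(x) \;=\; \left(1-\frac{k}{n}\right) 2^{-n} \;+\; \frac{k}{n} \cdot 2^{-(n-k)} \cdot \mathbb{1}[x_S = \mathbf{1}].
\end{align*}
Dividing by $D(x) = 2^{-n}$ and subtracting $1$ yields the clean expression
\begin{align*}
\hat D_S(x) \;=\; \frac{k}{n}\left(2^k \, \mathbb{1}[x_S = \mathbf{1}] - 1\right).
\end{align*}

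Next I would substitute this into $\langle \hat D_i, \hat D_j\rangle_D = \E_{x\sim D}[\hat D_i(x)\hat D_j(x)]$ and expand the product. Under the uniform $D$, the only probabilities needed are $\Pr[x_{S_i} = \mathbf{1}] = \Pr[x_{S_j} = \mathbf{1}] = 2^{-k}$ and $\Pr[x_{S_i} = x_{S_j} = \mathbf{1}] = 2^{-|S_i \cup S_j|} = 2^{-(2k - \lambda)}$, since the uniform distribution is a product measure and the event $x_{S_i} = x_{S_j} = \mathbf{1}$ is the event $x_{S_i \cup S_j} = \mathbf{1}$. The two linear cross terms each contribute $-1$ (they evaluate to $-2^k \cdot 2^{-k} = -1$) while the constant term contributes $+1$, and the quadratic term contributes $2^{2k} \cdot 2^{-(2k-\lambda)} = 2^\lambda$. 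Combining,
\begin{align*}
\langle \hat D_i, \hat D_j\rangle_D \;=\; \frac{k^2}{n^2}\bigl(2^\lambda - 1 - 1 + 1\bigr) \;=\; \frac{k^2}{n^2}\bigl(2^\lambda - 1\bigr) \;\le\; \frac{2^\lambda k^2}{n^2},
\end{align*}
which is the claimed bound (and in fact the slightly stronger identity with $2^\lambda - 1$).

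There is essentially no obstacle in this proof: once the density ratio is expressed in the indicator form above, the computation is a few lines of bookkeeping. The only point worth double-checking is the combinatorial identity $|S_i \cup S_j| = 2k - \lambda$, and the fact that the two degenerate cases $\lambda = 0$ (giving $\langle \hat D_i, \hat D_j\rangle_D = 0$, i.e.\ disjoint cliques are uncorrelated) and $\lambda = k$ (i.e.\ $i = j$, giving the squared norm $k^2(2^k - 1)/n^2$) both match the general formula as a sanity check.
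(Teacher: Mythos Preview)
Your proof is correct and follows essentially the same approach as the paper: write down the density ratio $\hat D_S$ explicitly and compute the inner product under the uniform reference distribution. Your presentation is in fact slightly cleaner than the paper's: by writing $\hat D_S(x) = \tfrac{k}{n}\bigl(2^k\,\mathbb{1}[x_S=\mathbf 1]-1\bigr)$ in unified indicator form you obtain the exact identity $\langle \hat D_i,\hat D_j\rangle_D = \tfrac{k^2}{n^2}(2^\lambda-1)$, whereas the paper carries the piecewise description through the computation and only records the upper bound $\tfrac{2^\lambda k^2}{n^2}$.
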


\begin{proof}
For the distribution $D_i$, we consider the probability $D_i(x)$ of
generating the vector $x$.  Then,
\begin{align*}
D_i(x) =
\begin{cases}
& (\frac{n-k}{n}) \frac{1}{2^n} + (\frac{k}{n})\frac{1}{2^{n-k}}  \mbox{\ \ \ \ \ \ \ \ \ \ \ \ \ \ \ \ \ \  if } \forall s \in S_i, x_s = 1\\
& (\frac{n-k}{n}) \frac{1}{2^n}  \mbox{\ \ \ \ otherwise} .
\end{cases}
\end{align*}
Now we compute the vector $\hat{D}_i = \frac{D_i}{D}-1$:
\begin{align*}
\frac{D_i}{D}-1 =
\begin{cases}
&  \frac{k 2^{k}}{n}  - \frac{k}{n}    \mbox{\ \ \ \ \ \ \ \ \ \  if } \forall s \in S_i, x_s = 1\\
&  -\frac{k}{n}\mbox{\ \ \ \ otherwise} .
\end{cases}
\end{align*}
We then bound the inner product:
\begin{eqnarray*}
\left\langle \hat{D}_i, \hat{D}_j \right\rangle_D & \leq &
\frac{2^{n-2k+\lambda}}{2^n}\left(\frac{k 2^{k}}{n}  -  \frac{k}{n} \right)^2
+ 2\left( \frac{2^{n-k}}{2^n} - \frac{2^{n-2k+\lambda}}{2^n}\right)\left( \frac{k 2^{k}}{n}  -
    \frac{k}{n} \right) \left(-\frac{k}{n}\right)
 + \left(-\frac{k}{n}\right)^2\\
 &\le& \frac{2^\lambda k^2}{n^2},
\end{eqnarray*}
which holds when $k \ge 3.$
We also note that $\left\langle \hat{D}_i, \hat{D}_j \right\rangle_D \geq 0$.
\end{proof}

We now give a bound on the average correlation of any $\hat{D}_S$ with a large number of distinct biclique distributions.
\begin{lemma}\label{lemma:pccalc}
 Let $\delta \geq 1/\log n$ and $k\leq n^{1/2 - \delta}$.  For any integer $\ell \leq k$,
   $S \in \cS_k$ and any set $A \subseteq \cS_k$
  where $|A| \ge 3(m-1)/n^{2\ell \delta}$,
\begin{align*}
 \frac{1}{|A|} \sum_{S_i\in A}\left| \la  \hat{D}_S, \hat{D}_i \ra \right| <
  2^{\ell+1}   \frac{k^2}{n^2}.
\end{align*}
\end{lemma}

\begin{proof}
In this proof we first show that if the total number of sets in $A$ is large then most of sets in $A$ have a small overlap with $S$. We then use the bound on the overlap of most sets to
obtain a bound on the average correlation of $D_S$ with distributions for sets in $A$.

Formally, we let $\alpha = \frac{k^2}{n^2}$ and using Lemma \ref{lem:clique-correlation-bound} get the bound $|\la \hat{D}_i,
\hat{D}_j \ra| \leq 2^{|S_i \cap S_j|} \alpha$. Summing over $S_i
\in A$,
\begin{align*}
  \sum_{S_i\in A} \left| \la  \hat{D}_S, \hat{D}_i \ra \right|\leq \sum_{S_i\in A} 2^{|S
    \cap S_i|} \alpha .
\end{align*}
For any set $A \subseteq \cS_k$ of size $t$ this bound is maximized when the sets of
$A$ include $S$, then all sets that intersect $S$ in $k-1$ indices,
then all sets that intersect $S$ in $k-2$ indices and so on until the
size bound $t$ is exhausted. We can therefore assume without loss of generality that $A$ is defined in precisely this way.

Let $T_\lambda = \left\{S_i \ | \ |S \cap S_i| =\lambda \right\}$ denote the subset of all $k$-subsets that intersect with $S$ in exactly $\lambda$ indices.
Let $\lambda_0$ be the smallest $\lambda$ for which $A \cap
T_\lambda$ is non-empty.  We first observe that for any $1 \leq j \leq k-1$,
\begin{align}
  \frac{|T_j|}{|T_{j+1}|} = \frac{{k \choose j} {n-k \choose k-j}}{{k
      \choose j+1} {n-k \choose k-j-1}} = \frac{(j+1) 
    (n-2k+j+1)}{(k-j)^2} \geq \frac{(j+1)(n-2k)}{(k-j)^2} \geq \nonumber\\
 \geq \frac{(j+1)(n-2n^{1/2-\delta})}{n^{1-2\delta}}\geq  
    \frac{(j+1)(1-2n^{-1/2-\delta})}{n^{-2\delta}}\geq
  \frac{(j+1)n^{2 \delta}}{2}.
  \label{eq:T-ratio}
\end{align}
By applying this equation inductively we obtain, $$|T_j| \leq
\frac{2^j \cdot |T_0|}{j! \cdot n^{2 \delta j}} < \frac{ 2^j \cdot
  (m-1)  }{j! \cdot n^{2 \delta j}}$$ where the last inequality holds since $|T_0| \leq m-2$ whenever $n\geq 2k+1$.
 For $n$ larger than some fixed constant
\begin{align*}
  \sum_{k \geq  \lambda \geq j} |T_{\lambda}| < \sum_{k \geq  \lambda \geq j} \frac{ 2^\lambda \cdot (m-1)  }{\lambda! \cdot n^{2\delta \lambda}} \leq \frac{m-1}{n^{2\delta
  j}} \sum_{k \geq  \lambda \geq j} \frac{ 2^\lambda  }{\lambda! \cdot n^{2\delta (\lambda-j)}} \leq \frac{3(m-1)}{n^{2\delta j}}\ .
\end{align*}
By definition of $\lambda_0$, $|A| \leq \sum_{j\geq \lambda_0} |T_j| <
3(m-1)/n^{2 \delta \lambda_0}$.  In particular, if $|A| \geq
3(m-1)/n^{2\ell \delta }$ then $n^{2\delta \lambda_0} < n^{2\ell\delta}$ or $\lambda_0 < \ell$.  Now we can
conclude that
\begin{align*}
  \sum_{S_i\in A} \left | \la \hat{D}_S, \hat{D}_i \ra \right |& \le \sum_{j
    =\lambda_0}^k 2^{j} |T_j
  \cap A| \alpha \\
  & \le \left( 2^{\lambda_0} |T_{\lambda_0} \cap A| +
    \sum_{j=\lambda_0+1}^k 2^j |T_j| \right)
  \alpha \\
  &\leq \left( 2^{\lambda_0} |T_{\lambda_0} \cap A| + 2 \cdot
    2^{\lambda_0+1} |T_{\lambda_0+1}| \right)
  \alpha \\
  & < 2^{\lambda_0+2} |A| \alpha \leq 2^{\ell+1} |A| \alpha.
\end{align*}
To derive the second to last inequality we need to note that for every $j \geq
0$, $2^j |T_j| > 2(2^{j+1}|T_{j+1}|)$ whenever $n^{2\delta} \geq 4$. We can therefore telescope the sum.
\end{proof}

We can now bound the statistical dimension (with average correlation) of the planted $k$-biclique problem.
\begin{theorem}
\label{th:clique-sda-bound}
For $\delta \geq 1/\log n$ and $k\leq n^{1/2 - \delta}$ let $\Z$ the distributional planted
$k$-biclique problem. Then for any $\ell \leq k$, $\SDA(\Z,
2^{\ell+1} k^2/n^2,1/{n \choose k}) \geq n^{2\ell\delta }/3$.
In addition, let $D$ be the uniform distribution and denote the set of all planted distributions
by $\D$. Then,  $\SDA(\D,D, 2^{\ell+1} k^2/n^2) \geq n^{2\ell\delta }/3$.
\end{theorem}

\begin{proof}
For every solution $S \in \F$, $\Z_S = \{D_S\}$ and let $\D_S = \D
  \setminus \{D_S\}$. Note that $|\D_S| = {n \choose k}-1$ and therefore $|\D_S| \geq (1-1/{n \choose k}) |\D|$. This means that we can use $1/{n \choose k}$ as the solution
  set bound.

   Let $\D'$ be a set of distributions $\D' \subseteq \D_{S}$ such that $|\D'| \geq 3(m-1)/n^{2\ell \delta}$. Then by Lemma \ref{lemma:pccalc}, for every $S_i \in \D'$,
  $$ \frac{1}{|\D'|} \sum_{S_j \in \D'} \left|\la \hat{D}_i, \hat{D}_j \ra\right| <
  2^{\ell+1} \frac{k^2}{n^2} .$$ In particular, $\rho(\D',D) <
  2^{\ell+1} \frac{k^2}{n^2}$. By the definition of $\SDA$ (Definition \ref{def:sdima-random}), this means
  that $\SDA(\Z, 2^{\ell+1} k^2/n^2,1/{n \choose k}) \geq n^{2 \ell \delta}/3$.

 The second claim holds by exactly the same argument since $|\D'| \geq m/d$ implies $|\D'| \geq (m-1)/d$.
\end{proof}

For a positive $r$ we choose $\ell = \log r-1$.  Our lower bound for the planted bi-clique problem stated in Theorem~\ref{thm:planted-stat} follows from substituting the bound $\SDA(\Z,
r k^2/n^2,1/{n \choose k}) \geq n^{2(\log(r)-1)\delta }/3$ into Theorem~\ref{thm:avgvstat-random} (with $\eta=1/{n \choose k}$ and $\delta = 2/3$). In addition,  by Theorem~\ref{thm:avgvstat-random-decision} used with $\alpha = 1/2 + 1/t$, we obtain hardness of the decision version of the problem for randomized SQ algorithms which also
implies Theorem~\ref{thm:planted-stat}.
\begin{theorem}\label{thm:planted-stat-decision}
For any constant $\delta > 0$, any $k \leq n^{1/2 - \delta}$ and $r > 0$, let $D$ be the uniform distribution over $\zo^n$ and $\D$ be the set of all planted $k$-biclique distributions. For some $t = n^{\Omega(\log{r})}$, any randomized SQ algorithm that solves the decision problem $\B(\D,D)$ with probability $1/2+1/t$ requires $t$ queries to $\VSTAT(n^2/(rk^2))$.
\end{theorem}

Theorem~\ref{th:unbiased-from-vstat} used with $\delta = 1/9$ implies that an  algorithm that uses $m$ queries to $\SAMPLE$ and has success probability $2/3$ gives an algorithm that  uses $m$ queries to $\VSTAT(81m)$ and has success probability $2/3-1/9 = 5/9$. For some $m = \Omega(n^2/k^2)$, Theorem \ref{thm:planted-stat-decision} applied with $r = \Omega(1)$ implies that such algorithm cannot exist. This implies the lower bound for \sampleoracle algorithms stated in Theorem~\ref{thm:planted-strong}.

\subsection{Generalized Planted Densest Subgraph}
\label{sec:densest}

We will now show lower bounds on detecting a $(p,q)$-planted densest subgraph, a generalization of the
distributional planted biclique problem we defined in Definition \ref{def:densegraph}. Note that $p =1, q=1/2$ is precisely the distributional planted $k$-biclique problem. For this generalized problem, we will take $D$, the reference
distribution, to be that of $n$ independent Bernoulli variables with
bias $q$.

Before we give our results for this problem, we have to fix
some further notation: for $x \in \{0,1\}^n$, we define $\ones{x} =
\sum x_i$ ({\em i.e.}~the number of 1's in $x$); similarly for $\zeros{x} =
\sum 1-x_i$ (the number of 0's in $x$). We will denote the restriction
of a set by subscripting so that $x_S$ is $x$ restricted to the subset
$S \subseteq [n]$. We use $\bar{S}$ to denote the complement of $S$ in the current ground set.

First, we give a computation of the correlation. This is a generalized
version of Lemma \ref{lem:clique-correlation-bound}.
\begin{lemma}
Fix $0 < q \le p \le 1$ and let $\Delta_{pq} = 1+\frac{(p-q)^2}{q(1-q)}$. For $i,j \in [m]$,
\begin{align*}
  \innerprod{\hat{D}_i}{\hat{D}_{j}}_D = \left(
      \Delta_{pq}^{\lambda} -1 \right) \frac{
    k^2}{n^2},
\end{align*}
where $\lambda = |S_i \cap S_j|$.
\end{lemma}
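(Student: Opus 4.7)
The plan is to compute $\hat{D}_S = D_S/D - 1$ explicitly and then exploit the independence of coordinates under $D$. First I would observe that for any $x \in \zo^n$ we can factor
\[
\frac{D_S(x)}{D(x)} = \left(1 - \frac{k}{n}\right) + \frac{k}{n} \prod_{\ell \in S} g(x_\ell), \qquad g(1) = \frac{p}{q}, \quad g(0) = \frac{1-p}{1-q},
\]
because outside the ``plant'' event the numerator and denominator coincide, and inside it the $S^c$-coordinates cancel while the $S$-coordinates contribute the ratio of a $p$-Bernoulli to a $q$-Bernoulli mass function. Writing $r_S(x) = \prod_{\ell \in S} g(x_\ell)$, we get the clean form $\hat{D}_S = (k/n)(r_S - 1)$.

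Next I would expand
\[
\innerprod{\hat{D}_i}{\hat{D}_j}_D = \frac{k^2}{n^2}\, \E_D\!\left[(r_{S_i} - 1)(r_{S_j} - 1)\right] = \frac{k^2}{n^2}\left(\E_D[r_{S_i} r_{S_j}] - \E_D[r_{S_i}] - \E_D[r_{S_j}] + 1\right).
\]
A one-line check shows $\E_{x_\ell \sim \text{Bern}(q)}[g(x_\ell)] = q\cdot(p/q) + (1-q)\cdot((1-p)/(1-q)) = 1$, so by independence $\E_D[r_{S_i}] = \E_D[r_{S_j}] = 1$ and the middle terms vanish. Factoring the product over the overlap and the symmetric difference,
\[
r_{S_i}(x)\,r_{S_j}(x) = \prod_{\ell \in S_i \cap S_j} g(x_\ell)^2 \;\cdot\; \prod_{\ell \in S_i \triangle S_j} g(x_\ell),
\]
and invoking independence once more collapses this to $\E_D[r_{S_i} r_{S_j}] = (\E_D[g^2])^\lambda$ since each factor in the symmetric-difference product contributes a factor of $1$.

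All that remains is the per-coordinate calculation $\E_D[g^2] = p^2/q + (1-p)^2/(1-q)$, which after placing over the common denominator $q(1-q)$ simplifies to $1 + (p-q)^2/(q(1-q))$; combining the two steps yields exactly the stated formula. The only step that is more than bookkeeping is the final algebraic identity $p^2(1-q) + (1-p)^2 q = (p-q)^2 + q(1-q)$, but this is a direct expansion, so I do not expect any real obstacle. The key conceptual point that makes the computation painless is the choice of reference distribution $D$: because $D$ is the product of $q$-Bernoullis, the per-coordinate expectation of $g$ is exactly $1$, killing the cross terms and leaving only the contribution from the overlap $\lambda = |S_i \cap S_j|$.
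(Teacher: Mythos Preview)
Your proposal is correct and follows essentially the same route as the paper: both compute $\hat{D}_S=(k/n)(r_S-1)$ with $r_S(x)=\prod_{\ell\in S}(p/q)^{x_\ell}((1-p)/(1-q))^{1-x_\ell}$, expand $(r_{S_i}-1)(r_{S_j}-1)$, and use that each coordinate ratio has mean $1$ under $D$ so only the overlap term $(\E_D[g^2])^\lambda=(p^2/q+(1-p)^2/(1-q))^\lambda$ survives. Your use of independence is exactly what the paper does via repeated applications of the binomial theorem over the coordinates in $\bar{S_i}$, $S_i\setminus S_j$, $S_j\setminus S_i$, and $S_i\cap S_j$; the presentations differ only in notation.
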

\begin{proof}
 For any $x$, we
  have $D(x) = q^{\ones{x}} (1-q)^{\zeros{x}}$. For $D_i(x)$:
  \begin{align*}
    D_i(x) & = \Pr [x|\textrm{planted}]\Pr[\textrm{planted}] +
    \Pr[x|\textrm{not planted}]\Pr[\textrm{not planted}] \\
    & = \frac{k}{n} p^{\ones{x_{S_i}}}(1-p)^{\zeros{x_{S_i}}}
    q^{\ones{x_{\bar{S_i}}}} (1-q)^{\zeros{x_{\bar{S_i}}}} + \left( 1 -
      \frac{k}{n} \right) q^{\ones{x}} (1-q)^{\zeros{x}} .
  \end{align*}
  For $D_i(x)/D(x) - 1$, we have:
   \begin{align*}
     \frac{D_i(x)}{D(x)} -1 & = \frac{k}{n} \cdot
     \frac{p^{\ones{x_{S_i}}}(1-p)^{\zeros{x_{S_i}}} q^{\ones{x_{\bar{S_i}}}}
       (1-q)^{\zeros{x_{\bar{S_i}}}}}{q^{\ones{x}}
       (1-q)^{\zeros{x}}}-\frac{k}{n} \\
     & = \frac{k}{n} \left( \frac{p}{q}\right)^{\ones{x_{S_i}}} \left(
       \frac{1-p}{1-q} \right)^{\zeros{x_{S_i}}} - \frac{k}{n} .
   \end{align*}
  Now, for $S_j$ where $\abs{S_i \cap S_j} = \lambda$, we want to
  compute:
  \begin{align*}
    \innerprod{\hat{D}_i}{\hat{D}_{j}}_D = \left( \frac{k}{n}
    \right)^2 \sum_{ x \in \{ 0,1 \}^n} q^{\ones{x}}
    (1-q)^{\zeros{x}}\left[ \left( \frac{p}{q}\right)^{\ones{x_{S_i}}}
      \left( \frac{1-p}{1-q} \right)^{\zeros{x_{S_i}}} - 1 \right] \left[
      \left( \frac{p}{q}\right)^{\ones{x_{S_j}}} \left(
        \frac{1-p}{1-q} \right)^{\zeros{x_{S_j}}} - 1 \right] .
  \end{align*}
  There are three types of terms in the product in the summand. We
  deal with all these terms by repeated applications of the Binomial
  theorem. The first term illustrates this approach:
  \begin{align*}
    \sum_{ x \in \{0,1\}^n} q^{\ones{x}} (1-q)^{\zeros{x}} = (q +
    (1-q))^n = 1 .
  \end{align*}
  The second type of term is given by:
  \begin{align*}
    & \sum_{ x \in \{ 0,1 \}^n} q^{\ones{x}} (1-q)^{\zeros{x}} \left( \frac{p}{q}\right)^{\ones{x_{S_i}}}  \left(
    \frac{1-p}{1-q} \right)^{\zeros{x_{S_i}}}  \\
  & \qquad = \sum_{ x \in \{ 0,1 \}^n}
  q^{\ones{x_{\bar{S_i}}}}(1-q)^{\zeros{x_{\bar{S_i}}}} p^{\ones{x_{S_i}}}
  (1-p)^{\zeros{x_{S_i}}} \\
  & \qquad = \sum_{y \in \{ 0,1 \}^{\abs{S_i}}} p^{\ones{y}} (1-p)^{\zeros{y}}
  \sum_{z \in \{ 0,1 \}^{\abs{\bar{S_i}}}} p^{\ones{z}}
  (1-p)^{\zeros{z}} \\
  & \qquad = 1 .
  \end{align*}
  The third type of term is more complicated -- using the above trick,
  we can restrict $x$ to $T = S_i \cup S_j$ because the sum  taken over the
  remaining $x_i$ yields 1.
  \begin{align*}
\sum_{ x \in \{ 0,1 \}^n} q^{\ones{x}} (1-q)^{\zeros{x}}\left[ \left( \frac{p}{q}\right)^{\ones{x_{S_i}}}  \left(
    \frac{1-p}{1-q} \right)^{\zeros{x_{S_i}}}  \right]
\left[ \left( \frac{p}{q}\right)^{\ones{x_{S_j}}}  \left(
    \frac{1-p}{1-q} \right)^{\zeros{x_{S_j}}} \right]  \\
 = \sum_{ x \in \{ 0,1 \}^{\abs{T}}} q^{\ones{x}} (1-q)^{\zeros{x}}\left[ \left( \frac{p}{q}\right)^{\ones{x_{S_i}}}  \left(
    \frac{1-p}{1-q} \right)^{\zeros{x_{S_i}}}  \right]
\left[ \left( \frac{p}{q}\right)^{\ones{x_{S_j}}}  \left(
    \frac{1-p}{1-q} \right)^{\zeros{x_{S_j}}} \right] .
\end{align*}
Similarly, we can sum $x$ over coordinates in $S_i\setminus S_j$ and
$S_j \setminus S_i$. Hence, the sum simplifies:
\begin{align*}
  & \sum_{ x \in \{ 0,1 \}^n} q^{\ones{x}} (1-q)^{\zeros{x}}\left[
    \left( \frac{p}{q}\right)^{\ones{x_{S_i}}} \left( \frac{1-p}{1-q}
    \right)^{\zeros{x_{S_i}}} \right] \left[ \left(
      \frac{p}{q}\right)^{\ones{x_{S_j}}} \left(
      \frac{1-p}{1-q} \right)^{\zeros{x_{S_j}}} \right]  \\
  & = \sum_{ x \in \{ 0,1 \}^{\abs{S_i \cap S_j}}} q^{\ones{x}}
  (1-q)^{\zeros{x}}\left[ \left( \frac{p}{q}\right)^{\ones{x}}
    \left(
      \frac{1-p}{1-q} \right)^{\zeros{x}}  \right]^2 \\
  & = \sum_{ x \in \{ 0,1 \}^{\abs{S_i \cap S_j}}} \left(
    \frac{p^2}{q}\right)^{\ones{x}} \left(
    \frac{(1-p)^2}{1-q} \right)^{\zeros{x}} \\
  & = \left( \frac{p^2}{q} + \frac{(1-p)^2}{1-q} \right)^{\lambda} \\
  & = \Delta_{pq}^\lambda .
\end{align*}
Combining these three calculations yields:
\begin{align*}
  \innerprod{\hat{D}_i}{\hat{D}_{j}}_D =  \left( \frac{k}{n} \right)^2
  \left( \Delta_{pq}^\lambda
    - 1\right)
\end{align*}
\end{proof}
Next, in analogy with Lemma \ref{lemma:pccalc}, we give a bound on
average correlation for sufficiently many distributions.

\begin{lemma}
  Fix $0 < q < p \le 1$ and let $\Delta_{pq} = 1+\frac{(p-q)^2}{q(1-q)}$.  For $\delta > 0$ and $k\leq n^{1/2- \delta}$, if $n^{2\delta} \ge 8 \Delta_{pq}$ then for any integer $\ell \leq k$,   $S\in \cS_k$ and $A \subseteq \cS_k$ of size at least $2(m-1)/n^{2\ell \delta}$,
\begin{align*}
  \frac{1}{|A|} \sum_{S_i\in A} \left|\innerprod{ \hat{D}_S}{ \hat{D}_i}\right| <
   \frac{2k^2}{n^2} \left(  \Delta_{pq}^\ell -1 \right) .
\end{align*}
\end{lemma}
\begin{proof}
  We proceed as in the proof of Lemma \ref{lemma:pccalc}. Recall that $T_\lambda = \left\{S_i \ | \ |S \cap S_i| = \lambda \right\}$ denotes the subset of all $k$-subsets that intersect with $S$ in exactly $\lambda$ indices.
Let $\lambda_0$ be the smallest $\lambda$ for which $A \cap
T_\lambda$ is non-empty. As before, we obtain that $\lambda_0 < \ell$.

We now bound the average correlation with $\hat{D}_S$ as follows:
 \begin{align*}
    \sum_{S_i\in A} \left|\la \hat{D}_S, \hat{D}_i \ra \right| & \le
    \sum_{j=\lambda_0}^k \frac{k^2}{n^2} \cdot \left( \Delta_{pq}^j - 1 \right)
    |T_j \cap A| \\
    & \le \frac{k^2}{n^2} \cdot \left( |T_{\lambda_0} \cap A| \left(\Delta_{pq}^{\lambda_0} - 1 \right) +
      \sum_{j=\lambda_0+1}^k |T_j| (\Delta_{pq}^j - 1) \right) .
  \end{align*}

To bound the sum
  \begin{align*}
    \sum_{j=\lambda_0+1}^k (\Delta_{pq}^j-1) \abs{T_j}
  \end{align*}
  it suffices to show that it is geometrically decreasing as:
  \begin{align*}
    (\Delta_{pq}^j -1)\abs{T_j} \ge 2 \cdot (\Delta_{pq}^{j+1}-1) \abs{T_{j+1}} .
  \end{align*}
 We first note that $\Delta_{pq} > 1$ and therefore for $j \geq 1$,
 $$\frac{\Delta_{pq}^{j+1}-1}{\Delta_{pq}^j-1} \leq \Delta_{pq}+1 < 2\Delta_{pq}.$$

  From equation \eqref{eq:T-ratio} in the proof of Lemma \ref{lemma:pccalc} and our assumption on $\Delta_{pq}$ we obtain the necessary property:
  \begin{align*}
    \frac{\abs{T_j}}{\abs{T_{j+1}}} \ge \frac{(j+1)n^{2\delta}}{2}
    \ge 4 \cdot \Delta_{pq} > \frac{2(\Delta_{pq}^{j+1}-1)}{\Delta_{pq}^j-1}.
  \end{align*}
  To conclude,

  \begin{align*}
    \sum_{S_i\in A} \left|\la \hat{D}_S, \hat{D}_i \ra \right| & \le \frac{k^2}{n^2} \cdot \left( |T_{\lambda_0} \cap A| \left(\Delta_{pq}^{\lambda_0} - 1 \right) +
      \sum_{j=\lambda_0+1}^k |T_j| (\Delta_{pq}^j -1) \right) \\
    &\leq\frac{k^2}{n^2} \cdot \left( |T_{\lambda_0} \cap A| \left(\Delta_{pq}^{\lambda_0} - 1 \right) + 2 \cdot
      |T_{\lambda_{0}+1}| (\Delta_{pq}^{\lambda_0+1}- 1)\right)
     \\
    & \le 2 \cdot \frac{k^2}{n^2} \cdot \abs{A} \left(\Delta_{pq}^{\lambda_0+1} -1 \right)
         \\
    & \le 2 \cdot \frac{k^2}{n^2} \cdot \abs{A} \left(\Delta_{pq}^\ell -1 \right).
  \end{align*}
\end{proof}

From here the bound on statistical dimension $\SDA$ of detecting the $(p,q)$-planted densest subgraph now follows in the same way as in Theorem \ref{th:clique-sda-bound}.
\begin{theorem}
  Fix $0 < q < p \le 1$.  For $\delta > 0$ and $k\leq
  n^{1/2-\delta}$ let $\Z$ be the distributional $(p,q)$-planted
  densest $k$-subgraph problem. Then for any $\ell \leq
  k$, $$\SDA\left(\Z, \frac{2k^2}{n^2}  \left( \Delta_{pq}^\ell -1 \right),\frac{1}{{n \choose k}}\right)
  \geq n^{2 \ell \delta }/2.$$ provided that $ n^{2\delta} \ge 8\Delta_{pq}$.
\end{theorem}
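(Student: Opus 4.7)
The plan is to mirror closely the proof of Theorem \ref{th:clique-sda-bound}, substituting its clique-specific correlation machinery by the generalized counterpart just established in the preceding lemma. First I would fix the reference distribution $D$ to be the product of $n$ independent Bernoulli($q$) variables (as specified in Problem \ref{def:densegraph}), enumerate all $m = \binom{n}{k}$ size-$k$ subsets $S_1, \ldots, S_m$ of $[n]$, and let $\D_D = \{D_1, \ldots, D_m\}$ be the family of planted-densest-subgraph distributions. For any candidate solution $S \in \F$, I would set $\D_S = \D_D \setminus \{D_S\}$, so $|\D_S| = m - 1 = (1 - 1/\binom{n}{k})|\D_D|$, which immediately supplies the solution-set-bound parameter $\eta = 1/\binom{n}{k}$ in Definition \ref{def:sdima-random}.

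The core of the argument is then an average-correlation bound. I would set $d = n^{2\ell\delta}/4$ and consider any $\D' \subseteq \D_S$ with $|\D'| \geq |\D_S|/d = 4(m-1)/n^{2\ell\delta}$; this is exactly the size threshold required by the preceding lemma. For each fixed $S_i \in \D'$, applying that lemma with $S = S_i$ and $A = \D'$ yields
$$\sum_{S_j \in \D'} \langle \hat{D}_i, \hat{D}_j \rangle_D \;<\; |\D'| \cdot \frac{2k^2}{n^2}\left(\Bigl(1 + \tfrac{(p-q)^2}{q(1-q)}\Bigr)^{\ell+1} - 1\right).$$
The explicit formula from the first lemma, $\langle \hat{D}_i, \hat{D}_j \rangle_D = \bigl((1 + (p-q)^2/(q(1-q)))^{\lambda} - 1\bigr) k^2/n^2 \geq 0$, shows that these inner products are automatically nonnegative, so absolute values in the definition of $\rho$ are vacuous. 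Summing the displayed bound over $S_i \in \D'$ and dividing by $|\D'|^2$ gives
$$\rho(\D', D) \;<\; \frac{2k^2}{n^2}\left(\Bigl(1 + \tfrac{(p-q)^2}{q(1-q)}\Bigr)^{\ell+1} - 1\right),$$
which, by Definition \ref{def:sdima-random}, is exactly the $\SDA$ bound claimed.

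The main obstacle has already been cleared in the preceding lemma: that is where one needs the combinatorial control on $|T_\lambda|$ (the number of $k$-subsets intersecting $S$ in exactly $\lambda$ coordinates) together with the hypothesis $n^{2\delta} \geq 1 + (p-q)^2/q(1-q)$ to force the tail sum $\sum_j (1+(p-q)^2/q(1-q))^j |T_j|$ to decay geometrically, allowing it to be absorbed into the dominant $\lambda_0$-term. Given that lemma, the present theorem is pure bookkeeping: choose $d$ so that the size threshold of the lemma matches $|\D_S|/d$, use nonnegativity of the correlations to pass from the one-sided lemma bound to the symmetric average-correlation $\rho$, and read off the constants demanded by $\SDA$.
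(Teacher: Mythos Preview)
Your proposal is correct and follows exactly the approach the paper intends (the paper's own proof consists of the single sentence ``The proof is analogous to that of Theorem~\ref{th:clique-sda-bound}''). Your explicit remark that the correlations $\langle \hat D_i,\hat D_j\rangle_D = ((1+(p-q)^2/q(1-q))^\lambda-1)\,k^2/n^2$ are nonnegative, so that the absolute values in the definition of $\rho$ may be dropped, is a welcome clarification that the paper leaves implicit.
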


This $\SDA$ bound yields lower bounds for the $\VSTAT$ oracle:
\begin{corollary}\label{cor:densevstat}
  Fix $0 < q < p \le 1$  and let $\Delta_{pq} = 1+\frac{(p-q)^2}{q(1-q)}$.  For any constant $\delta > 0$, any $k \le
  n^{1/2 - \delta}$, $\ell \leq k$, at least $n^{\Omega(\ell)}$
  queries to $\VSTAT(n^2/(6k^2(\Delta_{pq}^\ell-1)))$ are required
  to solve the distributional $(p,q)$-planted
  densest $k$-subgraph problem with probability at least $2/3$ provided that $ n^{2\delta} \ge 8\Delta_{pq}$.
\end{corollary}
Similarly, by Theorem \ref{thm:avgvstat-random-decision}, the same lower bound applies to the decision version of the problem.


One is often interested in the case when $q=1/2$ and $p=1/2+\alpha$ (the
classical planted densest $k$-subgraph problem). In this setting $\Delta_{pq} = 1+4\alpha^2$ and $\Delta_{pq}^\ell-1 \leq e^{4\alpha^2\ell}- 1 \leq 8 \alpha^2\ell$ whenever $\ell \leq 1/(4 \alpha^2)$.
This gives a lower bound of $n^{\Omega(\ell)}$ against $\VSTAT(n^2/(48\ell \alpha^2 k^2))$ as stated in Corollary \ref{thm:densest-stat-intro}.

Finally, we give an example of a corollary for the $\SAMPLE$ oracle.
\begin{corollary}\label{cor:denseunbias}
  For constants $c, \delta > 0$, density $p = {1}/{2} + {1}/{n^c}$, and $k \leq n^{1/2 - \delta}$,
  Let $D$ be the uniform distribution over $\zo^n$ and $\D$ be the set of all $(p,1/2)$-planted densest $k$-subgraph distributions. Any (randomized) \sampleoracle algorithm that solves the decision problem $\B(\D,D)$ with probability at least $2/3$, requires
  $\Omega((n^{2+2c})/{k^2})$
  queries to $\SAMPLE$.
\end{corollary}
\begin{proof}
  By the argument above with $\alpha = 1/n^c$, $\SDA(\B(\D,D), 16 k^2 \ell/n^{2+2c}) \geq n^{2\ell\delta}/2$. For $\ell = 4/2\delta$ we obtain that $\SDA(\B(\D,D), 64 k^2/n^{2+2c}) \geq n^4/2$. By applying Theorem \ref{thm:avgsample-v-decision} for success probability $2/3$, we obtain a lower bound of 
  $$m = \min\left\{\frac{d/3}{2},\frac{(4/3+1)^2}{48\bar{\gamma}} \right\} = \min\left\{\frac{n^4}{12},\frac{49/9}{48} \cdot \frac{n^{2+2c}}{64 k^2} \right\} = \Omega\left( \frac{n^{2+2c}}{k^2}\right)\ .$$ 
  samples to $\SAMPLE$.
\end{proof}

\section{Applications to Statistical Query Learning}\label{sec:app2SQ}
We will now use Corollary~\ref{cor:sd-genlb-decision} to demonstrate that our results generalize the
notion of statistical query dimension in learning theory and the statistical query lower
bounds based on $\SQDIM$.  We then show that our lower bounds imply stronger and more general lower bounds in the context of learning.

We start with a few relevant definitions. In an instance of a PAC learning
problem, the learner has access to random examples of an unknown
boolean function $c: X'\rightarrow \{-1,1\}$ from a set of Boolean
functions $\C$. A random example is a pair including a point and its
label $(x',c(x'))$ such that $x'$ is drawn randomly from a
distribution $D'$, which might or might not be known to the learning algorithm (whenever necessary, we use $'$ to distinguish
variables from the identically named ones in the context of
general search problems). Specifically, for a target function $c \in C$ and distribution $D'$ over $X'$ we denote by $D_c$ over $X=X' \times \{-1,1\}$, where $D_c(x',c(x')) = D'(x')$ and $D_c(x',-c(x')) =0$.

For $\eps > 0$, the goal of an $\eps$-accurate learning algorithm is to find,
with high probability, a Boolean hypothesis $h$ for which $\Pr_{x'
  \sim D'}[h(x') \neq c(x')] \leq \eps$. A statistical query learning algorithm~\citeP{Kearns:98} has access to the $\STAT$ oracle for the input distribution $D_c$ in place of random examples.

\eat{
A statistical query learning algorithm~\citeP{Kearns:98} has access to a statistical query oracle for the unknown function $c$ and distribution $D'$ in place of random examples. A query to the SQ oracle is a function $\phi: X' \times \{-1,1\} \rightarrow [-1,1]$. To such a query the oracle returns a value $v$ which is within $\tau$ of $\E_{D'}[\phi(x',c(x')]$, where $\tau$ is the tolerance parameter. This oracle is exactly the $\STAT$ oracle for the input distribution $D_c$.
}

\subsection{Relationship to $\SQDIM$}
\label{sec:relate-2-sq}

\citeAN{BlumFJKMR94} defined the {\em statistical query
  dimension} or \SQDIM of a set of functions $\C$ and distribution
$D'$ over $X'$ as follows (we present a simplification and
strengthening due to \citeAN{Yang05}).
\begin{definition}[\citeP{BlumFJKMR94}]
\label{def:old-weak-sqd}
For a concept class $\C$ and distribution $D'$, $\SQDIM(\C,D') = d'$ if $d'$ is the largest value for which there exist $d'$ functions $c_1, c_2, \ldots, c_{d'} \in \C$ such that for every
$i\neq j$, $|\la c_i , c_j \ra_{D'}| \leq 1/d'$.
\end{definition}

We first observe that correlations of Boolean functions relative to a distribution $D'$ are equivalent to correlations of corresponding distributions over examples relative to some reference distribution. Namely, let the reference distribution $D$ be the distribution for
which for every $(x',\ell) \in X$, $D(x',\ell) = D'(x')/2$. This is the distribution in which points are distributed according to $D'$ and labels are random unbiased coin flips. We denote it by $D' \times \{1/2,1/2\}$.
\begin{lemma}
For a distribution $D'$ and any Boolean functions $c,c_1$ and $c_2$,
For all $x' \in X'$, $\frac{D_c(x',\ell)}{D(x',\ell)}-1 = \ell \cdot c(x')$ and
$$\left\la \frac{D_{c_1}}{D} - 1, \frac{D_{c_2}}{D} - 1 \right\ra_D =  \la c_1, c_2 \ra_{D'}.$$
\end{lemma}
\begin{proof}
We first note that the definition of $D$ ensures that $D(x',\ell)$ is non-vanishing only when $D'(x')$ is non-vanishing and hence the function $\left(
\frac{D_c}{D} - 1 \right)$ is well-defined for any Boolean $c\in \C$. For every $c\in \C$, we have
$$\frac{D_c(x',c(x'))}{D(x',c(x'))}-1 = 2-1=1 \mathrm{\ \ \ and\ \ \ } \frac{D_c(x',-c(x'))}{D(x',-c(x'))}-1 = 0-1= -1.$$

Therefore,
$\frac{D_c(x',\ell)}{D(x',\ell)} -1= \ell \cdot c(x')$.
This implies that for any $c_1,c_2 \in \C$,
$$\left\la \frac{D_{c_1}}{D} - 1, \frac{D_{c_2}}{D} - 1 \right\ra_D = \E_{(x',\ell) \sim D}[\ell \cdot c_1(x') \cdot \ell \cdot c_2(x')] = \E_{D'}[c_1(x') \cdot
c_2(x')] = \la c_1, c_2 \ra_{D'}.$$
\end{proof}

The direct implication of this is that if $\SQDIM(\C,D') = d'$ then there exist $d'$ distributions over examples that are $(1/d',1)$-correlated relative to $D$. In particular, the decision problem of distinguishing example distributions from $D$ has large statistical dimension with pairwise correlations. We state this formally:
\begin{theorem}
\label{thm:sqdim2sd}
For a concept class $\C$ and distribution $D'$ over $X$ let $d' = \SQDIM(\C,D')$. Then for $\D_\C = \{D_c \cond c\in \C\}$ and $D=D' \times \{1/2,1/2\}$, $\SD(\B(\D_\C,D),1/d',1) \geq d'$.
\end{theorem}

\citeAN{BlumFJKMR94} proved that if a class of functions is
learnable using only a polynomial number of statistical queries of
inverse polynomial tolerance then its statistical query dimension is
polynomial. \citeAN{Yang05} strengthened their result and proved
the following bound (see \citeP{Szorenyi09} for a simpler proof).
\begin{theorem}[\citeP{Yang05}]
\label{th:SQDIM}
Let $\C$ be a class of functions and $D'$ be a distribution over $X'$ and let $d'=\SQDIM(\C,D')$.
Any SQ algorithm that learns $\C$ over $D'$ with error $\eps < 1/2-1/(2d'^{1/3})$ requires at least $d'^{1/3}/2-1$
queries to $\STAT(1/d'^{1/3})$.
\end{theorem}

In this result, the distribution $D'$ is fixed and known to the learner (such learning is referred to as {\em distribution-specific}) and it can be used to lower bound the complexity of learning $\C$ even in a weak sense. Specifically, when the learning algorithm is only required to output a hypothesis $h'$ such that $\Pr_{x' \sim D'}[h'(x') \neq c(x')] \leq 1/2 - \gamma'$ for some inverse polynomial $\gamma'$. It is well-known that weak learning of functions from $\C$ implies ability to distinguish examples of any function in $\C$ from points labeled randomly. This implies that we can apply our lower bound for decision problems to obtain a lower bound for weak learning that is essentially the same as the result of \citeAN{Yang05}.
\begin{corollary}
\label{cor:sqdim-from-sd}
Let $\C$ be a class of functions and $D'$ be a distribution over $X'$ and let $d'=\SQDIM(\C,D')$.
Any SQ algorithm that learns $\C$ over $D'$ with error $\eps < 1/2-1/d'^{1/3}$ requires at least $2d'^{1/3}-1$
queries to $\STAT(1/d'^{1/3})$.
\end{corollary}
\begin{proof}
Let $\D_\C = \{D_c \cond c\in \C\}$ and $D=D' \times \{1/2,1/2\}$.
We convert the weak learning algorithm into the algorithm for $\B(\D_\C,D)$ as follows.
Run the weak learning algorithm. Given hypothesis $h$ estimate the prediction error within $d'^{-1/3}/2$ by using the query $\phi(x',\ell) = h(x') \cdot \ell$ with tolerance $d'^{-1/3}$. If the answer to the query is $> d'^{-1/3}$ output 1 (meaning the input distribution is in $\D_\C$), otherwise output 0 (meaning that the input distribution is $D$).
Note that $\E_D[\phi] = 0$ and therefore this algorithm will always output 0 on $D$. Further, if the input distribution is $D_c$ and $\Pr_{D'}[h(x') \neq c(x')] < 1/2-1/d'^{1/3}$ then $$\E_{(x',\ell)\sim D_c}[\phi(x',\ell)] = \E_{(x',\ell)\sim D_c}[h(x') \cdot \ell] = \E_{x' \sim D'}[h(x') \cdot c(x')] = 1-2\Pr_{x' \sim D'}[h(x') \neq c(x')] > 2/d'^{1/3} .$$
Therefore in this case the answer to the query will be $> 2/d'^{1/3} - 1/d'^{1/3} = 1/d'^{1/3}$ and the algorithm will output 1.

By Theorem \ref{thm:sqdim2sd}, $\SD(\B(\D_\C,D),1/d',1) \geq d'$.
We can now apply the lower bound in Corollary \ref{cor:sd-genlb-decision} with $\gamma' = d'^{-2/3}/2$ to obtain that our algorithm must use $2d'^{1/3}$ queries to $\STAT(d'^{-1/3})$ to solve the problem. Our algorithm used one more query than the learning algorithm (of the same tolerance) which gives the stated lower bound.
\end{proof}
This corollary implies that lower bounds based on $\SQDIM$ are a special case of our lower bounds. One can also similarly show that the lower bounds based on the statistical query dimension of \citeAN{Feldman:12jcss} that characterizes learning to high accuracy are also a special case of our lower bounds.

\subsection{Lower Bounds for $1$-bit Sampling Oracle}
We now show how our results can be used to obtain lower bounds against \sampleoracle algorithms based on $\SQDIM$. Such lower bounds have been previously proved by \citeAN{Yang05} who referred to his model as Honest SQ model (apparently unaware of the connection to the model in \citeP{Ben-DavidD98}). In the Honest SQ model, the learner has access to an $\HSQ$ oracle.
A query to $\HSQ$ oracle is a function $\phi: X' \times \{-1,1\} \rightarrow \{-1,1\}$ and a sample
size $t > 0$. The oracle draws $x'_1, \ldots, x'_t \sim D'$, and returns
the value $\frac{1}{t}\sum_{i=1}^{t}{\phi(x',c(x'))}$. The total sample complexity of an algorithm
is the sum of the sample sizes it passes to $\HSQ$.

We note that using $\SAMPLE$ is equivalent to a call to $\HSQ$ with sample size $1$. Also $\SAMPLE$ can simulate estimation of queries from a larger number of samples in a straightforward way while obtaining the same
total sample complexity. Therefore $\HSQ$ is equivalent to the $\SAMPLE$ oracle.

Using Lemma \ref{lem:cor-2-sda} with $\gamma' = 1/\sqrt{d'}$ to convert a bound on pairwise correlations to a bound on average correlation, we can obtain that  $\SDA(\B(\D_\C,D),1/\sqrt{d'} + 1/d',1) \geq \sqrt{d'}/(1-1/d')$. Plugging this bound into Theorem ~\ref{thm:avgsample-v-decision}, we can derive sample complexity bounds on
\sampleoracle algorithms for learning in the same way as in the proof of Corollary \ref{cor:sqdim-from-sd}.
\begin{corollary}
Let $\C$ be a class of functions, $D'$ be a distribution over $X'$, $d'=\SQDIM(\C,D')$ and $\eps =1/2-1/d'^{1/4}$. Then
any \sampleoracle algorithm that, with probability at least $2/3$, $\eps$-accurately learns $\C$ over $D'$ requires $\Omega(\sqrt{d'})$ queries to $\SAMPLE$.
\end{corollary}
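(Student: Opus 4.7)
The plan is to adapt the argument of Corollary \ref{cor:randomized-sq} but invoke the sample-complexity lower bound of Theorem \ref{thm:avgsample-v} in place of the $\VSTAT$-based bound. I would first recast the learning task $(\C,D',\eps)$ as the search problem $\Z_\LL$ over distributions introduced before Theorem \ref{th:sqdim-from-sd}. Under this correspondence, an Honest SQ algorithm consuming $m$ total samples becomes an unbiased statistical algorithm making $m$ calls to the $\SAMPLE$ oracle for an unknown $D_c \in \D_\LL$: a single $\HSQ$ query of sample size $t$ is simulated by $t$ calls to $\SAMPLE$ on the same query function followed by averaging, so the two models have identical sample complexity up to constants.

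Next, I would extract from the proof of Theorem \ref{th:sqdim-from-sd} not only the bound $\SD(\Z_\LL, 1/d', 1) \geq d' - d'/(d'^{1/3}-1)$ but also the accompanying bound on the number of ``bad'' distributions: for every $f\in \F_\LL$, at most $d'/(d'^{1/3}-1)$ of the $d'$ witness distributions $D_{c_1},\ldots,D_{c_{d'}}$ lie in $\Z_\LL^{-1}(f)$, so the complement $\D_f$ has size at least $(1-\eta) \cdot d'$ with $\eta = 1/(d'^{1/3}-1) = O(d'^{-1/3})$. The pairwise-to-average conversion in the proof of Lemma \ref{cor:sd-sda-direct} is insensitive to $|\D_f|/|\D_D|$ and transfers verbatim into the $\eta$-refined Definition \ref{def:sdima-random}. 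Choosing $\gamma' = 1/\sqrt{d'}$ and using the pairwise bounds $\gamma = 1/d'$, $\beta = 1$ then yields
\[
\SDA(\Z_\LL,\, 1/\sqrt{d'},\, \eta) \;\geq\; \frac{|\D_f|\,(\gamma' - \gamma)}{\beta - \gamma} \;=\; \Omega(\sqrt{d'}).
\]

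I would finish by applying Theorem \ref{thm:avgsample-v} with $\delta = 2/3$ and $\bar\gamma = 1/\sqrt{d'}$. For $d'$ large enough, $\eta \leq 1/6$ and $\delta - \eta \geq 1/2$, so both terms in the minimum of the theorem are $\Omega(\sqrt{d'})$: the first, $d(\delta-\eta)/(2(1-\eta))$, because $d = \Omega(\sqrt{d'})$; the second, $(\delta-\eta)^2/(12\bar\gamma)$, because $1/\bar\gamma = \sqrt{d'}$ and $(\delta-\eta)^2 = \Omega(1)$. Combined with the reduction above, this yields the claimed $\Omega(\sqrt{d'})$ lower bound on the Honest SQ sample complexity.

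The main point to verify is that the pairwise-to-average conversion is genuinely compatible with the $\eta$-parameter of Definition \ref{def:sdima-random}, which is not literally what Lemma \ref{cor:sd-sda-direct} states; this is routine since the proof of that lemma only inspects pairwise correlations among the distributions in the chosen $\D_f$ and is agnostic to $|\D_f|/|\D_D|$, so one obtains the $\eta$-tracking version for free once $|\D_f| \geq (1-\eta)|\D_D|$ is supplied externally. The choice $\bar\gamma = \Theta(1/\sqrt{d'})$ is forced by balancing the two terms of Theorem \ref{thm:avgsample-v}: shrinking $\bar\gamma$ blows up the second term, while enlarging it shrinks $\SDA$ and hence the first term; the balance point is precisely what yields the $\sqrt{d'}$ rate.
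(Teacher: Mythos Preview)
Your proposal is correct and follows essentially the same approach as the paper: recast the learning problem as the search problem $\Z_\LL$, extract the solution-set bound $\eta=1/(d'^{1/3}-1)$ from the proof of Theorem~\ref{th:sqdim-from-sd}, convert the pairwise bounds $(\gamma,\beta)=(1/d',1)$ into an $\SDA$ bound via Lemma~\ref{cor:sd-sda-direct} with $\bar\gamma=\Theta(1/\sqrt{d'})$, and then invoke Theorem~\ref{thm:avgsample-v}. The paper's proof is only a one-sentence sketch naming exactly these ingredients; you have correctly filled in the details, including the balancing of $\bar\gamma$ and the observation that the $\eta$-refined version of Lemma~\ref{cor:sd-sda-direct} goes through unchanged.
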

This lower bound is similar to the result of \citeAN{Yang05} who shows a bound of $\Omega(d'/\log{d'})$ using a stronger $1/d'^3$ upper bound on correlations (and a substantially more involved proof). Note that the inverse of the maximum pairwise correlation is usually much lower than the number of functions. Therefore our result will give a stronger lower bound in most cases.

\subsection{New Lower Bound for Learning}
We now briefly describe a version of our lower bound for weak distribution-specific learning. It is stronger than known $\SQDIM$-based bounds in several ways. First, it explicitly decouples the tolerance (or number of samples) from the number of queries. This is particularly relevant for {\em attribute-efficient} learning that is learning when the dimension is high but the target function depends on few variables (see \citeP{Feldman14:open} for more details on SQ learning in this setting). Second, it captures sample complexity in a tighter way by going to average correlations and proving lower bounds against $\VSTAT$. Lower bounds against $\VSTAT$ also imply tighter lower bounds for $\SAMPLE$ and, via the reductions in \citeP{FeldmanPV:13}, against stronger oracles.

We now give versions of our main definitions specialized to the case of distribution-specific PAC learning. Although the target distribution is fixed, by varying the concept by which examples are labeled, we effectively generate a large set of different distributions as before.
The average correlation can be defined directly for a set of functions $\C'$ relative to a distribution $D'$:
\begin{align*}
\rho(\C',D') \doteq
\frac{1}{|\C'|^2}\sum_{c_1,c_2 \in \C'} \left|\left\langle
    c_1, c_2 \right\rangle_{D'} \right| .
\end{align*}

\begin{definition}\label{def:sdima-learning}
  For $\bar{\gamma}>0$, a distribution $D'$ over domain $X'$ and a set of Boolean functions $\C$ over $X'$
  the \textbf{statistical dimension} of $\C$ over $D'$ with average correlation $\bar{\gamma}$ is defined to be
   the largest integer $d$ for which there exists a finite set of functions $\C_{\bar{\gamma}} \subseteq \C$ such that
   for any subset $\C' \subseteq \C_{\bar{\gamma}}$, where $|\C'| \ge \C_{\bar{\gamma}}/d$, $\rho(\C',D') \leq \bar{\gamma}$. We denote it by $\SDA(\C,D',\bar{\gamma})$.
\end{definition}

Using Theorem \ref{thm:avgvstat-random-decision} and the reduction in Corollary \ref{cor:sqdim-from-sd} imply Theorem \ref{thm:general-learning}.

\section*{Acknowledgments} We thank Benny Applebaum, Avrim Blum, Uri Feige, Ravi Kannan, Michael Kearns, Robi Krauthgamer, Moni Naor, Jan Vondrak, and Avi Wigderson for
insightful comments and helpful discussions.

\bibliography{stat_algs}

\appendix

\section{Average-case vs Distributional Planted Bipartite Clique}
\label{sec:avg-to-dist-clique}

In this section we show the equivalence between the average-case
planted biclique problem (where a single graph is chosen randomly) and the
distributional biclique problem (where a bipartite graph is obtained from independent samples over $\{0,1\}^n$). The primary issue is that in the distributional biclique problem the biclique does not necessarily have the same size on the left side of
vertices as it does on the right side. We show that this is easy to fix by producing planted bicliques of smaller size on one of the sides. We do this by replacing vertices of the graph with randomly connected ones. We now describe the reductions more formally.

\begin{definition}\label{apbc}[Average-case planted biclique $\apbc(n,k_1,k_2)$]
Given integers $1\leq k_1,k_2\leq n$, consider the following distribution $\cald_{avg}(n,k_1,k_2)$  on bipartite graphs on $[n]\times[n]$ vertices. Pick two random sets of $k_1$ and $k_2$ vertices each from left and right side, respectively, say $S_1$ and $S_2$. Plant a bipartite clique on $S_1\times S_2$ and add an edge between all other pairs of vertices with probability $1/2$. The problem is to recover $S_1$ and $S_2$ given a random graph sampled from $\cald_{avg}(n,k_1,k_2)$.
\end{definition}

We will refer to the distributional biclique problem with $n$ samples as $\dpbc(n,k)$. Recall that in this problem we are given $n$ random and independent samples from distribution $D_S$  over $\zo^n$ for some unknown $S\subset [n]$ of size $k$ (see Definition \ref{def:clique}). The goal is to recover $S$.

\begin{theorem}\label{thm:distributional2average}
Suppose that there is an algorithm that solves $\apbc(n, k',k')$ in time $T'(n,k')$ and outputs the correct answer with probability $p'(n,k')$. Then there exists an algorithm that solves $\dpbc(n,k)$
in time $T(n,k)=O(nk T'(n,k/2))$ and outputs the correct answer with probability $p(n,k)=p'(n,k/2) - n2^{-\Omega(k)}$.
\end{theorem}
\begin{proof}
We will think of the distribution $\cald_{avg}(n,k',k')$ on graphs as a distribution on their respective adjacency matrices from $\{0,1\}^{n\times n}$.
Let $\A(n,k')$ be the algorithm that solves an instance of $\apbc(n,k',k')$.  Given $k$ and $n$, and access to
$n$ samples from $D_S$ for some set $S$ of size $k$, we will design an algorithm that finds $S$ by making  $O(nk)$ calls to the algorithm $\A(n,k')$ that
solves an instance of $\apbc(n,k',k')$.

Let $M$ be the $n\times n$ binary matrix whose rows are the $n$ samples from $D_S$.
First apply a random permutation $\pi:[n]\rightarrow [n]$ to the columns of $M$ to obtain $M'$ (this will ensure that the planted set is uniformly distributed among the $n$ coordinates, which is necessary in order to obtain instances distributed according to $\cald_{avg}(n,k',k')$).

In what follows we will denote by $k'\times k$ a biclique with $k'$ vertices on the left and $k$ vertices on the right.
Note that $M'$ has a $k'\times k$ planted biclique for some $k'$ that is distributed according to the binomial distribution $B(n,k/n)$. We denote the vertices on the left side of this biclique by $L$.
By a multiplicative Chernoff bound, $\Pr[ k/2\leq k'\leq 2 k ]\geq 1-2e^{-k/8}$.  From now on we will condition on this event occurring.

We first suppose that $k\leq k' \leq 2k$. We aim at obtaining instances of $\apbc(n,k,k)$ but recall that the left side of the planted bliclique has size $k' \geq k$. To reduce the size of the left side of the planted biclique to $k$ we will be replacing the vertices on the left side by randomly connected ones, one-by-one in a random order.
That is, start with $M'_0 = M'$. To obtain $M'_{t+1}$, we choose a random and uniform row of $M'_t$ that was not previously picked and replace it with a random and uniform $\{0,1\}^n$ vector. This gives a sequence of random matrices: $M'_1, M'_2, \ldots, M'_n$. Clearly, $M'_0$ has a planted biclique of size $k' \times k$ and $M'_n$ does not have a planted biclique (or, equivalently, has a $0 \times k$ biclique). A single step reduces the size of the left side of the biclique by at most 1. Therefore for some $i^*$, $M'_{i^*}$ has a $k\times k$ biclique. We denote the left side of this biclique by $L^*$. It is also easy to see that for every step $i$, conditioned on $M'_i$ containing $k''$ out of vertices in $L$, $M'_i$ is distributed exactly according to $\cald_{avg}(n,k'',k)$. We now condition on the event that for all $i$, $M'_i$ does not contain a $k\times k$ biclique such that its right side is different from $\pi(S)$.  It is not hard to see that this event happens with probability at least $1-n 2^{-\Omega(k)}$.

To recover $S$, we run $\A(n,k)$ on all the matrices $M'_i$. Let $L_i \times S_i$ be the biclique that $\A$ outputs on $M'_i$. We verify that $L_i \times S_i$ is a $k\times k$ biclique in $M'_i$. If so we output $\pi^{-1}(S_i)$. Note that when executed on $M'_{i^*}$, with probability $p'(n,k)$ this procedure will return $L^* \times \pi(S)$. In this case we will return exactly $S$. Further, by our conditioning, if the output of $\A$ is a $k\times k$ biclique then its right side must be $\pi(S)$.

We can now assume that $k/2 \leq k' < k$.  We aim at obtaining instances of $\apbc(n,k',k')$. To achieve this we reduce the size of the right side of the planted biclique to $k'$ in the same way as we reduced the size of the left side above: we will be replacing the vertices on the right side by randomly connected ones, one-by-one in a random order. As before we start with $M'_0 = M'$. To obtain $M'_{t+1}$, we choose a random and uniform column of $M'_t$ that was not previously picked and replace it with a random and uniform $\{0,1\}^n$ vector. This gives a sequence of random matrices: $M'_1, M'_2, \ldots, M'_n$. We know that for some $i^*$, $M'_{i^*}$ has a $k'\times k'$ biclique. We denote the right side of this biclique by $S^*$.  We now condition on the event that for all $i$, $M'_i$ does not contain a $k'\times k'$ biclique such that its left side is different from $L$. It is not hard to see that this event happens with probability at least $1-n 2^{-\Omega(k')} = 1-n 2^{-\Omega(k)}$.

Assume for now that we know $k'$. To recover $S$, we run $\A(n,k')$ on all the matrices $M'_i$. Let $L_i \times S_i$ be the biclique that $\A$ outputs on $M'_i$. We verify that $L_i \times S_i$ is a $k'\times k'$ biclique in $M'_i$. If so we let $S'$ be the set of all vertices on the right side connected to each vertex in $L_i$ (in the original graph after the permutation). If $|S'|=k$, we output $\pi^{-1}(S')$. Note that when executed on $M'_{i*}$, with probability $p'(n,k')$, this procedure will return $L \times S^*$.  Further, by our conditioning if the output of $\A$ is a $k'\times k'$ biclique then its left side must be $L$. All vertices in $\pi(S)$ are connected to $L$. The probability that any other vertex on the right side of $M'$ is connected to all vertices in $L$ is at most $n \cdot 2^{-k}$. Hence, conditioned on this event not occurring, we will recover exactly $S$.

To address the fact that $k'$ is not known, for each value of $k_1 = k-1,k-2,\ldots,k/2$, we run the algorithm under the assumption that $k' = k_1$ and stop once the algorithm has found a $k' \times k'$  biclique. If $k_1 > k'$ then, by our conditioning on none of $M'_i$ containing a $k'\times k'$ biclique such that its left side is different from $L$ there cannot exist a $k_1 \times k_1$ biclique in the graph. Therefore the algorithm will not output anything until $k_1 = k'$ at which point our analysis above applies.

To analyze the success probability and running time we can assume for simplicity that it is harder to find smaller planted bicliques than larger ones, and so for $k_1 \in  [k/2,k]$, $T'(n,k_1) \leq T'( n,k/2)$ and $p'(n,k')\leq p'(n,k/2)$. Therefore the running time of our algorithm is $T(n,k)=O(nk T'(n,k/2))$, and its success probability is $p(n,k)=p'(n,k/2) -  n2^{-\Omega(k)}$.
\end{proof}

We now prove the converse of Theorem \ref{thm:distributional2average}.
\begin{theorem}\label{thm:average2distributional}
Suppose that there is an algorithm that solves $\dpbc(n,k)$ that runs in time $T(n,k)$ and outputs the correct answer with probability $p(n,k)$. Then there exists an algorithm that
solves $\apbc(n,k',k')$  in time $T'(n,k')=O(nk' T(n,k'/2))$ and outputs the planted biclique with probability $p'(n,k') \geq p(n, k'/2)-n2^{\Omega(-k')}$.
\end{theorem}

\begin{proof}
Let $\A(n,k)$ denote the algorithm for solving $\dpbc(n,k)$, which, for any $S\subseteq [n]$ of size $k$, takes $n$ samples chosen according to  $D_S$ and outputs the planted set $S$ with probability $p(n,k)$. We will construct an algorithm for $\apbc(n,k',k')$ that takes as input an adjacency matrix $M$ chosen randomly according to $\cald_{avg}(n,k',k')$, (as in Definition \ref{apbc}) and outputs a biclique $S_1\times S_2$ of size  $k'\times k'$. Note that, with probability $1-n2^{-\Omega(k')}$, the set $S_1\times S_2$ is the unique $k'\times k'$ biclique in $M$ and we will condition on this event.

We first observe that an instance of $\dpbc(n,k)$ can be equivalently thought of as follows: first pick an arbitrary set $S$ of $k$ vertices on the right side of the graph; then pick $\ell$ according to $B(n,k/n)$; pick a random subset $S'$ of $\ell$ vertices on the left side of the graph; make $S' \times S$ a biclique and connect all the other pairs of vertices randomly and independently with probability $1/2$. The probability that $\ell \in [5k/6,6k/5]$ is at least $1-2^{-\Omega(k)}$ and therefore the probability that $\A(n,k)$ succeeds conditioned on this event is at least $p(n,k) - 2^{-\Omega(k)}$. This implies that there exists $\ell_k \in [5k/6,6k/5]$ such that conditioned on $\ell = \ell_k$, the probability that $\A(n,k)$ succeeds is at least $p(n,k) - 2^{-\Omega(k)}$ (we note that $\ell_k$ might depend on $S$).

Let $M$ be the adjacency matrix of the given instance of  $\apbc(n,k',k')$. For each column of $M$ (corresponding to a vertex on the right side), with probability $3/4$ we replace it with a random and uniform vector from $\zo^n$ and let $M'$ denote the obtained adjacency matrix. We denote by $S$ the subset of $S_2$ containing vertices that were not replaced by randomly connected vertices. Let $k=|S|$. With probability at least $1-2^{-\Omega(k)}$, $k \in [3k'/5, 5k'/6]$ and we will condition on this event.

We aim at obtaining an instance of $\dpbc(n,k)$ in which exactly $\ell_k$ vertices on the left are connected to all vertices in the planted set $S$. By the argument above, we know that we can assume that $\ell_k \in [5k/6,6k/5] \subseteq [k'/2,k']$ and $\A$ succeeds with probability at least $p(n,k) - 2^{-\Omega(k)}$ given an instance in which exactly $\ell_k$ vertices on the left are connected to all vertices in $S$.

 $M'$ has a $k' \times k$ biclique so to reduce the size of the left side of the planted biclique to $\ell_k$ we will be replacing the vertices on the left side by randomly connected ones, one-by-one in a random order as in the  proof of Theorem \ref{thm:distributional2average}.
This gives a sequence of random matrices: $M' = M'_0,M'_1, M'_2, \ldots, M'_n$. For every $i$, Let $S'_i$ denote the subset of vertices in $S_1$ that were not replaced by a randomly connected vertex in $M'_i$. It is also easy to see that for every $i$, conditioned on $|S'_i| = k''$, $M'_i$ is distributed exactly as $n$ samples from $D_S$ in which $k''$ samples were chosen to be connected to all vertices in $S$.

To recover $S_1$ and $S_2$, we run $\A(n,k)$ on all the matrices $M'_i$ (where, we assume for now that $k$ is known). Let $L_i \times R_i$ be the biclique that $\A$ outputs on $M'_i$. Let $S_1^*$ be the set of all (left side) vertices connected in the original input graph to all vertices in $R_i$ and $S_2^*$ be the set of all (right side) vertices in the input graph connected to all vertices in $S_1^*$. If $|S_1^*| = |S_2^*| = k'$ then we output the biclique $S_1^* \times S_2^*$. Otherwise we go to the next step (if none of the steps produces a biclique the algorithm fails).
We first note that, unless the algorithm fails, it outputs a $k'\times k'$ biclique in the input graph which, by our conditioning, can only be the true planted biclique. Further, there exists $i^*$ such that $|S'_{i^*}| =\ell_k$.  $M'_{i^*}$ is distributed as $n$ samples from $D_S$ in which $\ell_k$ samples were chosen to be connected to all vertices in $S$. Therefore by our conditioning, with probability at least $p(n,k) - 2^{-\Omega(k)}$, $\A(n,k)$ will output $S$. The vertices in $S_1$ are connected to all vertices in $S$ and with probability at least $1-n2^{-k}$ no other vertex in the original graph is. The vertices in $S_2$ are connected to all vertices in $S_1$ and, by our conditioning no other vertex is. Therefore, with probability at least  $p(n,k) - n2^{-\Omega(k)}$ the algorithm will produce the true $k' \times k'$ biclique.

To address the fact that $k$ is not known, for each value of $k_1 = \lceil 5k'/6 \rceil,\ldots, \lfloor 2k'/3\rfloor$, we run the algorithm under the assumption that $k = k_1$ and stop once the algorithm has found a $k' \times k'$  biclique. The algorithm can only output the true planted biclique and therefore this will not reduce the success probability.

As before, to analyze the success probability and running time we assume for simplicity that it is harder to find smaller planted sets than larger ones, and so for $k_1 \in [\lfloor 2k'/3\rfloor,  \lceil 5k'/6 \rceil]$, $T(n,k_1) \leq T(n,k'/2)$ and $p(n,k)\leq p(n,k'/2)$. Therefore the running time of our algorithm is $T'(n,k')=O(nk' T(n,k'/2))$, and its success probability is $p'(n,k')\geq p(n,k'/2) -  n2^{-\Omega(k')}$.
\end{proof}

\end{document}